\newcommand*{\myTitle}{A Communication-Efficient Distributed Data Structure for Top-$k$ and $k$-Select Queries}
\crefname{step}{Step}{Steps}
\DeclareMathAlphabet{\pazocal}{OMS}{zplm}{m}{n}
\newcommand*{\myThanks}{This work was partially supported by the German Research Foundation (DFG) within the Priority Program ``Algorithms for Big Data'' (SPP 1736).}
\title{\myTitle\thanks{\myThanks}}
\author{Felix Biermeier \and Björn Feldkord \and Manuel Malatyali \and
	Friedhelm Meyer auf der Heide \\ [0.4em]
	Heinz Nixdorf Institute \& 	Computer Science Department\\
	Paderborn University, Germany\\[0.2em]
	\{felixbm, bjoernf, malatya, fmadh\}@mail.upb.de}
\date{}
\newcommand{\bigO}{\pazocal{O}}
\newcommand*{\polylog}{\text{polylog}\, }
\newcommand*{\ALG}{\texttt{ALG}\xspace}
\newcommand*{\E}{\mathbb{E}}
\newcommand*{\squash}{\vspace*{-0.25cm}}
\newcommand{\topkproblem}{Top-$k$-Problem\xspace}
\newcommand{\topk}{Top-$k$\xspace}
\newcommand{\topkprotocol}{Top-$k$ Protocol\xspace}
\newcommand*{\kselect}{$k$-Select\xspace}
\newcommand*{\aproxKSelectProblem}{approx.\ $k$-Select Problem\xspace}
\newcommand*{\aproxkSelectProtocol}{Approx.\ $k$-Select Protocol\xspace}
\newcommand{\CoFaSel}{\textsc{CoFaSel}\xspace}
\newcommand{\CoFaSelAmp}{\textsc{CoFaSelAmp}\xspace}
\newcommand*{\sample}{\mathcal{S}_{\varepsilon,\delta}}
\newcommand*{\h}{\mathcal{H}}
\newcommand{\con}{\kappa}
\newcommand*{\logphilog}{\log_{\nicefrac{1}{\phi}} (\log (n))}
\newcommand*{\al}{a_\ell}
\newcommand*{\rl}{r_\ell}
\newcommand{\comment}[2]{}
\newcommand{\man}[1]{\comment{MM}{#1}}
\newtheorem{theorem}{Theorem}[section]
\newtheorem{lemma}[theorem]{Lemma}
\newtheorem{corollary}[theorem]{Corollary}
\newtheorem{definition}[theorem]{Definition}
\begin{document}
\maketitle

\begin{abstract}
We consider the scenario of $n$ sensor nodes observing  streams of data.
The nodes are connected to a central server whose task it is to compute some function over all data items observed by the nodes.
In our case, there exists a total order on the data items observed by the nodes.
Our goal is to compute the $k$ currently lowest observed values or a value with rank in $[(1-\varepsilon)k,(1+\varepsilon)k]$ with probability $(1-\delta)$.
We propose solutions for these problems in an extension of the distributed monitoring model where the server can send broadcast messages to all nodes for unit cost.
We want to minimize communication over multiple time steps where there are $m$ updates to a node's value in between queries.
The result is composed of two main parts, which each may be of independent interest:
\begin{enumerate}
  \item Protocols which answer \topk and \kselect queries. These protocols are memoryless in the sense that they gather all information at the time of the request.
	\item A dynamic data structure which tracks for every $k$ an element close to $k$.
\end{enumerate}
We describe how to combine the two parts to receive a protocol answering the stated queries over multiple time steps.
Overall, for Top-$k$ queries we use $\bigO(k + \log m + \log \log n)$ and for $k$-Select queries $\bigO(\frac{1}{\varepsilon^2} \log \frac{1}{\delta} + \log m + \log^2 \log n)$ messages in expectation.
These results are shown to be asymptotically tight if $m$ is not too small.
\end{abstract}

\newpage
\section{Introduction}

Consider a distributed sensor network which is a system consisting of a huge amount of nodes.
Each node continuously observes its environment and measures information (e.g.\ temperature, pollution or similar parameters).
We are interested in aggregations describing the current observations at a central server.
To keep the server's information up to date, the server and the nodes can communicate with each other.
In sensor networks, however, the amount of such communication is particularly crucial, as communication has the largest impact to energy consumption, which is limited due to battery capacities \cite{battery}.
Therefore, algorithms aim at minimizing the (total) communication required for computing the respective aggregation function at the server.

We consider several ideas to potentially lower the communication used.
Each single computation of an aggregate should use as little communication as possible.
Computations of the same aggregate should \emph{reuse} parts of previous computations.
Only compute aggregates, if necessary.
Recall that the continuous monitoring model creates a new output as often as possible.

\subsection{Model}
We consider the distributed monitoring model, introduced by Cormode, Muthukrishnan, and Yi in \cite{cormodeSurvey}, in which there are $n$ distributed nodes, each uniquely identified by an identifier (ID) from the set $ \{1,\dots,n \}$, connected to a single server.
Each node observes a stream of data items over time, i.e.\ at any discrete time step $t$ node $i$ observes a data item $d_i^t$.
We assume that the data items have a total order and denote by \emph{rank($d$)} the position of data item $d$ in the sorted ordering.
Furthermore, we assume that the sorted ordering is unique, i.e.\ for each data items $d_i$ and $d_j$ either $d_i \leq d_j$ or $d_i \geq d_j$ holds.

The server is asked to, given a query at time $t$, compute an output $f(t)$ which depends on the data items $d_i^{t}$ with $i = 1, \ldots, n$ observed across all distributed streams.
The exact definition of $f(\cdot)$ depends on the concrete problems under consideration, which are defined in the section below.
For the solution of these problems, we are interested in both, exact and approximation algorithms.
An exact algorithm computes the (unique) output $f(t)$ with probability $1$. 
An $\varepsilon$-approximation of $f(t)$ is an output $\tilde f(t)$ of the server such that $(1-\varepsilon)f(t) \leq \tilde f(t) \leq (1+\varepsilon)f(t)$ holds.
We call an algorithm that provides an $\varepsilon$-approximation with probability at least $1-\delta$, an $(\varepsilon, \delta)$-approximation algorithm.
We say an algorithm is correct \emph{with high probability}, if for a given constant $c > 1$ it is correct with probability at least $1-n^{-c}$.

\paragraph{Communication Network} 
To be able to compute the output, the nodes and the server can communicate with each other by exchanging single cast messages or by broadcast messages sent by the server and received by all nodes.
Both types of communication are instantaneous and have unit cost per message.
That is, sending a single message to one specific node incurs cost of one and so does one broadcast message.
Each message has a size of $\bigO \left(\mathcal{B} + \log (n) + \log( \log (\frac{1}{\delta}))\right)$ bits, where $\mathcal{B}$ denotes the number of bits needed to encode a data item.
A message will usually, besides a constant number of control bits, consist of a data item, a node ID and an identifier to distinguish between messages of different instances of an algorithm applied in parallel (as done when using standard probability amplification techniques).
A broadcast channel is an extension to \cite{cormodeSurvey}, which was originally proposed in \cite{cormodeFunction} and afterwards applied in \cite{mmm15,mmm16,sirocco}.
Between any two time steps we allow a communication protocol to take place, which may use a polylogarithmic number of rounds.
The optimization goal is the minimization of the communication complexity, given by the total number of exchanged messages, required to answer the posed requests or rebuild the data structure.

\paragraph{Problem Description}
In this work, we consider two basic problems that are related to the rank of the data items: 
(1) Compute exactly (all of) the $k$ smallest data items observed by the nodes at the current time step $t$ and
(2) output an $(\varepsilon, \delta)$-approximation of the data item with rank $k$.

Formally, let $\pi_t$ denote the permutation of the node IDs $\{1, \ldots, n\}$ such that $\pi_t(i)$ gives the index of the data item with rank $i$ at time $t$, 
i.e. $i = rank(d_{\pi_t(i)}^t)$.
First, we denote the \topkproblem as the output $\{d_{\pi_t(1)}^t, \ldots, d_{\pi_t(k)}^t\}$ for a given $1 \leq k \leq n$ and we consider exact algorithms for this problem.
Second, we consider the \aproxKSelectProblem which is to output one data item $d \in \{d_{\pi_t((1-\varepsilon) k)}^t, \ldots, d_{\pi_t((1+\varepsilon) k)}^t \}$. 
We consider $(\varepsilon, \delta)$-approximation algorithms, i.e.\ an algorithm outputs such a data item $d$ correctly with probability at least $1-\delta$.

\paragraph{Distributed Data Structure}
We develop a data structure which supports the following operations:

\begin{enumerate}[leftmargin=3cm]	
	\item[UPDATE$(i,d)$:] Node $i$ receives a new data item $d$.
	\item[INITIALIZE$()$:] Set up the data structure (the nodes may already have observed data items).
	\item[REFRESH$()$:] Is called by the server if a request is made and the data structure is already initialized.
	\item[ROUGH-RANK$(k)$:] Returns a data item $d$ where $rank(d)\in[k, k\cdot \log^{c}(n)]$ holds with probability at least $1-\log^{c'}(n)$ for some suitable constants $c, c'$.
\end{enumerate}

ROUGH-RANK queries are used to receive an element as a basis for further computations in our protocols.
We consider a distributed data stream in our model as a sensor node observing its environment by explicitly calling an UPDATE operation to overwrite its previous observation by a new one.

\subsection{Our Contribution}
In this paper we propose exact algorithms for the $k$ currently lowest observed values or approximation algorithms for a value with rank in $[(1-\varepsilon)k,(1+\varepsilon)k]$ with probability $(1-\delta)$.
 Our data structure is based on single-shot computations which are of independent interest:
\begin{table}[ht]
\begin{center}
\begin{tabular}[ht]{
		@{\hspace{0.2cm}} r @{\hspace{0.2cm}} | 
		@{\hspace{0.2cm}} c @{\hspace{0.2cm}} | 
		@{\hspace{0.2cm}} c @{\hspace{0.2cm}} | 
		@{\hspace{0.2cm}} c @{\hspace{0.2cm}} | 
		@{\hspace{0.2cm}} c @{\hspace{0.2cm}} | 
		@{\hspace{0.2cm}} c @{\hspace{0.2cm}}}
	 
	problem & type & bound & (total) comm. &  comm.\ rounds & Reference\\
	\hline
	Top-$k$ & exact & upper & $\bigO(k \cdot \log (n))$ & $\bigO(k \cdot \log(n))$ & \cite{mmm15} \\
	Top-$k$ & exact & upper & $k + \log(n) + 1$ & $\bigO(k + \log(n))$ & \cref{sec:topk}\\
	Top-$k$ & exact & upper & $\bigO(k + \log(n))$ & $\bigO(\log(\frac{n}{k}))$ & \cref{se:aproxkSelect} \\
	Top-$k$ & exact & lower & $k + \Omega(\log(n)) $ & / & \cref{se:lowerBounds} \\
	$k$-Select & approx. & upper & $\bigO(\frac{1}{\varepsilon^2} \log (\frac{1}{\delta}) + \log (n))$ & $\bigO(\log(\frac{n}{k}))$ & \cref{se:aproxkSelect} \\
	$k$-Select & approx. & lower & $\Omega(\log (n))$ & / & \cref{se:lowerBounds}
\end{tabular}
\caption{Summary of results for single-shot computations.}
\end{center}
\end{table}

\vspace*{-0.8cm}
\noindent With notion of the data structure this relates to the results as presented in \cref{tb:ds}.
\begin{table}[ht]
\begin{center}
\begin{tabular}[ht]{
	@{\hspace{0.2cm}} r @{\hspace{0.2cm}} | 
	@{\hspace{0.2cm}} c @{\hspace{0.2cm}} | 
	@{\hspace{0.2cm}} c @{\hspace{0.2cm}} | 
	@{\hspace{0.2cm}} c @{\hspace{0.2cm}}} 

	operation & (total) comm. &  comm.\ rounds & Reference\\
	\hline
	INITIALIZE & $\bigO(\log(n))$, $\Omega(\log(n))$ & $\bigO(\log(n))$ & \cref{se:init}\\
	REFRESH & $\bigO(\log(m))$, $\Omega(\log(m))$ & $\bigO(\log(n))$ & \cref{se:refresh} \\
	UPDATE & amortized & 1 & \cref{sec:DS} \\
	ROUGH-RANK & 0 & 0 & \cref{sec:DS} \\
\end{tabular}
\caption{Summary of results using our new data structure.}
\label{tb:ds}
\end{center}
\end{table}

In Section~\ref{sec:final} we describe how to combine the one-shot computations with the given data item from the ROUGH-RANK operation supported by our data structure.
This leads to the overall bound of $\Theta(k + \log(m) + \log(\log(n)))$ messages 
in expectation for \topk queries. 
The bound on the number of messages is asymptotically tight if 
$m \geq \log(n)$ holds where $m$ denotes the number of UPDATEs since the last query. 
For \kselect queries and applying the same combination as above, this protocol uses $\Theta\! \left(\frac{1}{\varepsilon^2} \log (\frac{1}{\delta}) + \log(m) + \log^2(\log(n)) \right)$ messages in expectation, where the bound on the number of messages is asymptotically tight if $m \geq \log^{\log(\log(n))}(n)$ holds.

Furthermore, we parameterize our algorithms such that it is possible to choose a trade-off between the number of messages used and the number of communication rounds, i.e.\ time, used to process the queries.

\subsection{Related Work}
\label{sec:relatedWork}
Cormode, Muthukrishnan, and Yi introduce the Continuous Monitoring Model \cite{cormodeSurvey} with an emphasis on systems consisting of $n$ nodes generating or observing \emph{distributed} data streams and a designated coordinator. 
In this model the coordinator is asked to continuously compute a function, i.e.\ to compute a new output with respect to all observations made up to that point.
The objective is to aim at minimising the total communication between the nodes and the coordinator.
We enhance the continuous monitoring model (as proposed by Cormode, Muthukrishnan, and Yi in \cite{cormodeFunction}) by a broadcast channel. 
Note, that we are not strictly continuous in the sense that we introduce a dynamic data structure which only computes a function, if there is a query for it. 
However, there is still a continuous aspect: 
In every time step, our data structure maintains elements close to all possible ranks in order to quickly answer queries if needed.

An interesting area of problems within this model are threshold functions:
The coordinator has to decide whether the function value (based on all observations) has reached a given threshold $\tau$. 
For well structured functions (e.g.\ count-distinct or the sum-problem) asymptotically optimal bounds are known \cite{cormodeSurvey,cormodeFunction}.
Functions which do not provide such structures (e.g.\ the entropy \cite{chakrabarti}), turn out to require much more communication volume.

A related problem is considered in \cite{babcock}.
In their work, Babcock and Olston consider a variant of the distributed top-$k$ monitoring problem:
There is a set of objects $\{O_1, \ldots, O_n\}$ given, in which each object has a numeric value. 
The stream of data items updates these numeric values (of the given objects). 
In case each object is associated with exactly one node, their problem is to monitor the $k$ largest values.
Babcock and Olston have shown by an empirical evaluation, that the amount of communication is by an order of magnitude lower than that of a naive approach.

A model related to our (sub-)problem of finding the $k$-th largest values, and exploiting a broadcast channel is investigated by the shout-echo model \cite{marberg,rotem}:
A communication round is defined as a broadcast by a single node, which is replied by all remaining nodes. 
The objective is to minimise the number of communication rounds, which differs from ours.

\section{One shot computation: Top-K}
\label{sec:topk}
In this section we present an algorithm which identifies all data items with rank at most $k$ currently observed by the sensor nodes.
Note that when we later apply this protocol for multiple time steps, not all sensor nodes might participate.
In this section, we denote by $N$ the number of participating nodes.

Our protocol builds a simple search-tree-like structure based on a height the nodes draw from a geometric distribution.
Afterwards, a simple strategy comparable to an in-order tree walk is applied.
In order to identify the smallest data item this idea is implemented as follows:
The protocol draws a uniform sample of (expected) size $\frac{1}{\phi}$ and broadcasts the smallest data item.
Successively, the protocol chooses a uniform sample until the smallest item is identified. 
In this description, each drawing of a sample corresponds to consider all children of the current root of
the search tree and then continue with the left-most child as the new root.

The protocol is given a maximal height $h_{max}$ for the search tree which corresponds to the number of repetitions of the protocol described above.
We define a specific value for $h_{max}$ in \cref{th:topk}.
Furthermore, the algorithm is given a parameter $\phi$ which defines the failure probability of the geometric distribution. 
\begin{algorithm}[ht]
		\begin{minipage}[t]{0.45\linewidth-1em}
			Initialization() \squash
			\begin{enumerate}
				\item Each node $i$ draws a random variable $h_i$, i.i.d.\ from a geometric distribution with $p = 1-\phi$ 
				\item Server defines $\ell \coloneqq -\infty$, $u \coloneqq \infty$, $h \coloneqq h_{max}$ and $S \leftarrow \emptyset$
				\item \textbf{Call} Top-$k$-Rec$(\ell, u, h)$
				\item Raise an \textbf{error}, if $|S| < k$
			\end{enumerate}\squash
		\end{minipage}
		~~~~~~~~~~
		\begin{minipage}[t]{0.53\linewidth-1em}
			Top-$k$-Rec($\ell, u, h)$ \squash
			\begin{enumerate}[noitemsep]
				\item \textbf{If} $h = 0$ \textbf{then} 
				\item \quad \textbf{if} $|S| = k$ \textbf{then} return $S$,
				\item \quad \textbf{Else} end recursion
				\item Server probes sensor nodes $i$ with \\ $\ell < d_i < u$ and $h_i \geq h$ \\
				 Let $r_1< \ldots< r_j$ be the responses
				\item \textbf{If} there was no response \textbf{then}
				\item \quad \textbf{Call} Top-$k$-Rec$(\ell, u, h-1)$
				\item \textbf{Else}
				\item \quad\textbf{Call} Top-$k$-Rec$(\ell, r_1, h-1)$
				\item \quad$S \leftarrow S \cup r_1$
				\item \quad\textbf{For} $i = 1$ to $j-1$ do 
					\item \hspace*{0.5cm} \textbf{Call} Top-$k$-Rec$(r_i, r_{i+1}, h-1)$
					\item \hspace*{0.5cm} $S \leftarrow S \cup r_{i+1}$
				\item \quad\textbf{Call} Top-$k$-Rec$(r_j, u, h-1)$
			\end{enumerate}
		\end{minipage}
	\caption{\topkprotocol($\phi, h_{max}$)}
	\label{alg:topk}
\end{algorithm}

The algorithm starts by drawing a random variable $h_i$ from a geometric distribution, i.e.\ $\Pr[h_i = h] = \phi^{h-1} (1-\phi)$.
We discuss the choice of $\phi$ at the end of this section. 
Note that $\phi$ enables a trade off between the number of messages sent in expectation and communication rounds used.

The protocol can be implemented in our distributed setting by having the server broadcast the values
$\ell, u,$ and $h$ such that each node with the corresponding height values and data items responds.
Note, the variables $r_1, \ldots, r_j$ used in Steps $8, 9, 11, 12,$ and $13$ refer to responses of the current call of Top-$k$-Rec.

\subsubsection*{Analysis}
In the following we show that the expected number of messages used by the \topkprotocol is upper bounded by $k + \frac{1-\phi}{\phi}\cdot\log_{\nicefrac{1}{\phi}} (N) + 1$ in \cref{th:topk}. 
Afterwards, an upper bound of $\bigO(\phi \cdot k + h_{max})$ on the number of communication rounds is presented in \cref{le:topk-rounds}.
Defining $\phi \coloneqq 1/2$ the bound on the communication translates to a tight bound of $k + \log (n)+1$ in \cref{co:topk-opt} complemented by a simple lower bound of $k + \Omega(\log (n))$ in \cref{se:lowerBounds}.

We show an upper bound on the communication used by the \topkprotocol analyzing the expected value of a mixed distribution. 
The analysis works as follows:
We sort the nodes by their rank and determine the number of nodes with height $\leq h$ before the first node with a height $>h$ in this ordering by a geometric-sequence in \cref{def:geoSeq}.
For each height $h$, this number can then be used to determine the number of nodes which send a message on height $h$, which we model by a geocoin-experiment in \cref{def:geoExp}.
Note that this analysis turns out to be very simple since independence can be exploited in a restricted way and leads to a proper analysis with respect to exact constants. 

\begin{definition}
	\label{def:geoSeq}
	We call a sequence $G = (G_1, \ldots, G_m)$ of $m$ random experiments a \emph{geometric-sequence}, if each $G_h$ is chosen from a geometric distribution with $p_{h}^{geo} \coloneqq \phi^h$.
	We denote its $size(G) \coloneqq \sum_h G_h $ and say it \emph{covers} all nodes, if $size(G) \geq N$. 
\end{definition}

For the analysis, we choose a fixed length of $m \coloneqq \log_{\nicefrac{1}{\phi}} (N)$ and modify $G$ to $G' = (G_1, \ldots, G_{m-1}, N)$
such that $G'$ covers all nodes with probability 1.

Based on a given geometric-sequence, we define a sequence describing the number of messages send by the nodes on a given height.
We take the number of nodes $G_i$ as a basis for a Bernoulli experiment where the success probability is the probability a node
sends a message on height $i$. This is $\Pr[h=h_i~|~h\leq h_i]=\frac{\phi^{h-1}(1-\phi)}{1-\phi^h}$.

\begin{definition}
	\label{def:geoExp}
	We denote a \emph{geocoin-experiment} by $C = (C_1, \ldots, C_m)$ of random variables $C_h$ which are drawn from $Binom (n=G_h, p_h^{bin} = \frac{\phi^{h-1}(1-\phi)}{1-\phi^h})$, i.e.\ $C_h$ out of $G_h$ successful coin tosses where each coin toss is successful with probability $p_h^{bin}$.
\end{definition}

\begin{theorem}
	\label{th:topk}
	Let $N>k$ and $h_{max} \geq \log_{\nicefrac{1}{\phi}} (N)$ hold. 
	The \topkprotocol uses at most $k + \frac{1-\phi}{\phi} \log_{\nicefrac{1}{\phi}} (N) + 1$ messages in expectation.
\end{theorem}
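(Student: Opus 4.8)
The plan is to bound the expected number of messages by separately accounting for (i) the messages that correspond to elements eventually placed into $S$ (the "hits") and (ii) the remaining messages, which are the overhead of walking the search tree past empty subtrees. Since every response $r_1, \ldots, r_j$ in a call of Top-$k$-Rec except possibly the smallest one is added to $S$ in the lines following that call, each such response is charged to a distinct data item that ends up in $S$; as $|S|$ never exceeds $k$ (the recursion returns as soon as $|S| = k$), the total number of "hit" messages is at most $k$. This gives the leading term $k$. The remaining task is to show that the expected number of the other messages — intuitively, the $r_1$ of each call, i.e.\ the leftmost response that triggers the descent into the left subtree — is at most $\frac{1-\phi}{\phi}\log_{\nicefrac{1}{\phi}}(N) + 1$.

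For the second part I would use exactly the machinery set up in \cref{def:geoSeq} and \cref{def:geoExp}. Fix the ordering of the $N$ participating nodes by rank. For a fixed height $h$, consider scanning this ordered list and counting how many consecutive nodes have height $\le h$ before hitting one with height $> h$: this count is geometrically distributed with success probability $1-\phi^h$, matching $G_h$ in the geometric-sequence (with the truncation $G' = (G_1,\dots,G_{m-1},N)$ ensuring coverage with probability $1$, and $m = \log_{\nicefrac{1}{\phi}}(N)$). Each of these $G_h$ nodes sends a message at height $h$ (in the relevant recursive call) exactly when its own height equals $h$ conditioned on being $\le h$, i.e.\ with probability $p_h^{bin} = \frac{\phi^{h-1}(1-\phi)}{1-\phi^h}$ — this is the geocoin-experiment $C_h$. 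So the total number of messages is stochastically dominated by $\sum_{h=1}^{m} C_h$ plus the at-most-$k$ hits, and I would argue that the "non-hit" contribution is captured by $\sum_h C_h$ minus the hits, so that $\E[\text{messages}] \le k + \E\!\left[\sum_{h=1}^m C_h\right]$. Then $\E[C_h \mid G_h] = G_h \cdot p_h^{bin}$ and $\E[G_h] = \frac{1}{1-\phi^h}$ (mean of the geometric), so by the tower rule $\E[C_h] = \frac{1}{1-\phi^h}\cdot\frac{\phi^{h-1}(1-\phi)}{1-\phi^h}$ — wait, more carefully one gets $\E[C_h] = \E[G_h]\,p_h^{bin} = \frac{1}{1-\phi^h}\cdot\frac{\phi^{h-1}(1-\phi)}{1-\phi^h}$; I would actually want the cleaner identity $\E[G_h\,p_h^{bin}] = \frac{\phi^{h-1}(1-\phi)}{(1-\phi^h)}\cdot\E[G_h]$ and then bound $\E[G_h] \le \frac{1}{1-\phi}$ is too crude — the right move is to observe that $G_h$ counts nodes and each contributes its coin, and summing over $h$ telescopes because a fixed node contributes to the geometric-sequence at every height up to its own height. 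Summing $\sum_{h\ge 1} \E[C_h]$ and bounding the geometric tail yields $\frac{1-\phi}{\phi}\log_{\nicefrac{1}{\phi}}(N)$, with the trailing $+1$ coming from the truncated last term $G_m = N$ contributing $N \cdot p_m^{bin} = N\cdot\frac{\phi^{m-1}(1-\phi)}{1-\phi^m}$, which for $m = \log_{\nicefrac{1}{\phi}}(N)$ (so $\phi^m = 1/N$) evaluates to $\frac{N\phi^{m-1}(1-\phi)}{1-1/N} = \frac{(1-\phi)/\phi}{1-1/N} \le \frac{1-\phi}{\phi}\cdot\frac{N}{N-1}$, i.e.\ a constant absorbed into the $+1$.

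The main obstacle I anticipate is making the charging argument between the two parts airtime: one must be careful that every message sent is either the smallest response $r_1$ of its call (the "descent" message, counted by the $C_h$'s) or gets uniquely charged to an element added to $S$, with no double-counting, and that the early termination when $|S|=k$ doesn't break the correspondence with the idealized geometric-sequence process (it only helps, since stopping early removes messages). A secondary subtlety is justifying the stochastic domination: the actual process is adaptive (the intervals $(\ell,u)$ depend on earlier responses), whereas $G$ and $C$ are defined non-adaptively on the rank order; the key observation is that, conditioned on the heights $h_1,\dots,h_N$, the set of nodes that respond at each height is determined, and the rank-ordered scan described above recovers exactly the partition of nodes into subtrees, so the distributions match. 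I would state this as the crux lemma and then the arithmetic above finishes the bound.
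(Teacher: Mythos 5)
Your overall skeleton is the paper's: split the cost into the $k$ unavoidable messages plus an overhead term, model the overhead by the geometric-sequence and geocoin-experiment of \cref{def:geoSeq} and \cref{def:geoExp}, sum $\E[C_h]$ over the heights below $\pazocal{H} \coloneqq \log_{\nicefrac{1}{\phi}}(N)$, and treat larger heights separately. There are, however, two concrete gaps. First, your charging argument identifies the overhead with the leftmost response $r_1$ of each call and charges $r_2,\dots,r_j$ to distinct elements of $S$; this is backwards. Every response, including $r_1$, is eventually added to $S$ unless the algorithm halts first, and the messages that are actually wasted are exactly the responses of rank greater than $k$ --- a \emph{suffix} $r_i,\dots,r_j$ of a call's responses, not its leftmost element. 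As stated, your partition neither covers all messages nor charges only genuine elements of $S$. The paper's decomposition is cleaner: each of the $k$ smallest items sends exactly one message with probability $1$, and the remaining messages come precisely from responders of rank $>k$; these form the left spine of the search restricted to ranks beyond $k$ (a node responds only if no smaller-ranked node in that range has strictly larger height), which is exactly what the geocoin-experiment counts.

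Second, the central identity $\E[C_h] = \frac{1-\phi}{\phi}$ is never actually established in your write-up, and the intermediate quantity you use is wrong: $G_h$ is the number of failures before the first occurrence of an event of probability $\phi^h$ (a node having height $>h$), so $\E[G_h] = \frac{1-\phi^h}{\phi^h}$, not $\frac{1}{1-\phi^h}$. With the correct mean, the tower rule gives $\E[C_h] = \E[G_h]\, p_h^{bin} = \frac{1-\phi^h}{\phi^h}\cdot\frac{\phi^{h-1}(1-\phi)}{1-\phi^h} = \frac{1-\phi}{\phi}$ immediately (the paper derives the same identity via a memorylessness recursion). You notice your value looks off and gesture at a telescoping argument, but never land on the identity, and without it the constant in front of $\log_{\nicefrac{1}{\phi}}(N)$ is unjustified. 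A smaller issue of the same kind concerns the trailing $+1$: your last term evaluates to $\frac{1-\phi}{\phi}\cdot\frac{N}{N-1}$, which is not a constant absorbed into $+1$ when $\phi$ is small, and your sum stops at height $m$, ignoring nodes of height above $m$ (which exist when $h_{max}>m$). The paper instead bounds all heights $\geq \pazocal{H}$ at once by the expected number of nodes of height at least $\pazocal{H}$, namely $\phi^{\pazocal{H}-1}N = \frac{1}{\phi}$, so that $(\pazocal{H}-1)\frac{1-\phi}{\phi} + \frac{1}{\phi} = \pazocal{H}\,\frac{1-\phi}{\phi}+1$ yields the stated bound.
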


\begin{proof}
	The probability to send a message of a node $v$ within the \topk is $1$.
	It remains to show that the overhead is bounded by $\frac{1-\phi}{\phi}\ \log_{\nicefrac{1}{\phi}} (N) + 1$.

	The number of messages sent by \cref{alg:topk} (excluding the $k$ nodes observing the $k$ smallest data items) is upper bounded by a geocoin-experiment $C$. 
	Let $\pazocal{H} \coloneqq \log_{\nicefrac{1}{\phi}} (N)$.   
	For $h < \pazocal{H}$ we use that the geometric distribution is memory-less and hence 
	$ \E[C_h] = (1-p_h^{geo}) \cdot (p_h^{bin} + \E[C_h]) 
	           = (1- \phi^h) \cdot \left(\frac{\phi^{h-1}(1-\phi)}{1-\phi^h} + \E[C_i]\right). $
	This can simply be rewritten as $\E[C_i] = (1-\phi)/ \phi$.

  For $i \geq \pazocal{H}$ 
   we bound the number of messages by the total number of nodes with height at least $\pazocal{H}$.
	These can be described as the expectation of a Bernoulli experiment with $N$ nodes and success probability $\phi^{\pazocal{H}-1}$ and hence
	$\E[C_{\geq \pazocal{H}}]\leq\phi^{\pazocal{H}-1}\cdot N = 1/\phi$.

\noindent
  In total, we get
	$\sum_h \E[C_h] = 
	\left(\sum_{i=h}^{\pazocal{H}-1}\E[C_i] \right)+\E[C_{\geq \pazocal{H}}]
	\leq\frac{1-\phi}{\phi} \ \log_{\nicefrac{1}{\phi}}(N)+1.$
\end{proof}

\begin{lemma}
	\label{le:topk-rounds}
	The \topkprotocol needs $\bigO(\phi \cdot k + h_{max})$ communication rounds in expectation.
\end{lemma}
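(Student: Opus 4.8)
The plan is to charge each communication round to a recursive call. Every execution of Top-$k$-Rec with height parameter $h \geq 1$ carries out exactly one probe, i.e.\ one broadcast of $(\ell, u, h)$ together with the matching single-cast replies, which needs $\bigO(1)$ rounds; an execution with $h = 0$ sends nothing. So the number of communication rounds is, up to a constant factor, the number of executed calls with $h \geq 1$ before the protocol halts. Write $v_1 < v_2 < \dots$ for the currently observed data items in increasing order, so $v_j$ has rank $j$, and note that the protocol halts the instant $|S| = k$, that is, directly after $v_k$ is inserted into $S$.

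First I would account for the calls made \emph{after} $v_k$ enters $S$. Looking at \cref{alg:topk}, the statement inserting $v_k$ is immediately followed by a recursive call on the interval just to the right of $v_k$, after which the protocol only recurses into left-most sub-intervals; along such a left-most descent no ``$S \leftarrow S \cup r$'' statement is reached (each of these follows the first recursive call of its frame), so $|S|$ stays $k$ and the first $h = 0$ call that is reached returns $S$ and terminates. Hence at most $h_{max} - 1$ calls with $h \geq 1$ are made after $v_k$ is inserted, all of them on intervals whose left endpoint is $v_k$.

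Next I would bound the calls made \emph{before} $v_k$ enters $S$, grouped by height $h \in \{1, \dots, h_{max}\}$. Every executed interval $(\ell, u)$ is the gap between two consecutive ``tall'' boundaries: both $\ell$ and $u$ are $\pm\infty$ or nodes of height strictly greater than $h$ (a node first appears as an interval endpoint only at heights below its own), and such a gap is determined by its left endpoint alone. Because the in-order traversal reaches $(\ell,u)$ before $v_k$ is inserted only when $\ell < v_k$, the left endpoint lies in $\{-\infty, v_1, \dots, v_{k-1}\}$, and a node $v_j$ can be the left endpoint of a height-$h$ interval only if $h_{v_j} > h$. Therefore at most $1 + |\{\, j \leq k-1 : h_{v_j} > h \,\}|$ calls of height $h$ are made before $v_k$ is inserted.

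Combining both parts, the number of executed calls with $h \geq 1$ is at most
\[
  (h_{max}-1) + \sum_{h=1}^{h_{max}} \Bigl( 1 + \bigl|\{\, j \leq k-1 : h_{v_j} > h \,\}\bigr| \Bigr)
  \;\leq\; 2 h_{max} + \sum_{j=1}^{k-1} (h_{v_j} - 1),
\]
and since $\E[h_{v_j}] = \tfrac{1}{1-\phi}$ this has expectation at most $2 h_{max} + \tfrac{\phi}{1-\phi}(k-1) = \bigO(\phi k + h_{max})$ for $\phi \leq \tfrac12$; multiplying by the $\bigO(1)$ rounds per call proves the lemma. I expect the real work to be in the second step — checking that every executed interval is indeed a gap between two higher-height nodes and is pinned down by its left endpoint, and in the first step that after $v_k$ enters $S$ the protocol performs only a single left-most descent before halting. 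Once these structural facts are in place, the remaining estimate is the same kind of geometric series as in \cref{th:topk}.
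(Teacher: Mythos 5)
Your proof is correct, but it takes a genuinely different route from the paper's. The paper argues in two phases: $\bigO(h_{max})$ rounds until the minimum is found, and then, reusing the computation from \cref{th:topk} that each round yields $\frac{1-\phi}{\phi}$ new responses in expectation, a further $\bigO(\phi\cdot k)$ rounds until $|S|=k$. You instead count recursive calls directly and deterministically: the injection of the height-$h$ calls into their left endpoints gives the pointwise bound $2h_{max}+\sum_{j\le k-1}(h_{v_j}-1)$, and linearity of expectation finishes. Your structural claims (one probe per call with $h\ge 1$; distinct left endpoints among calls of equal height, each of height exceeding $h$; left endpoints before termination lying in $\{-\infty,v_1,\dots,v_{k-1}\}$; a single left-most descent after $v_k$ enters $S$) all check out against \cref{alg:topk}. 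Your version buys two things. First, rigor: the paper's step of dividing $k$ by the per-round yield $\frac{1-\phi}{\phi}$ to conclude an expected number of rounds is informal, whereas your bound is a deterministic inequality followed by taking expectations of i.i.d.\ geometric heights. Second, it exposes a hidden hypothesis: the per-element cost is $\E[h_{v_j}-1]=\frac{\phi}{1-\phi}$, which is $\bigO(\phi)$ only when $\phi$ is bounded away from $1$ --- the same condition silently needed for the paper's claim that $\frac{1-\phi}{\phi}=\Omega(\frac{1}{\phi})$. You make this restriction ($\phi\le\frac12$) explicit; the paper does not, even though its advertised message/round trade-off is most interesting precisely when $1-\phi$ is small, where the true expected round count is $\Theta\bigl(h_{max}+\frac{\phi}{1-\phi}k\bigr)$ rather than $\bigO(h_{max}+\phi k)$. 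It would be worth flagging that the lemma as stated should carry this assumption on $\phi$.
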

\begin{proof}
	We structure the proof in two steps: 
	First, we analyse the number of rounds used to determine the minimum, and second, the number of communication rounds used to determine the \topk.
	
	Observe, that the algorithm uses a linear amount of steps (linear in $h_{max}$), until it reaches $h = 1$, after which the minimum is found. 
	Afterwards in each step the algorithm recursively probes for nodes successively larger than the currently largest values, that are added to the output set $S$.
	Note, that by the analysis in Theorem~\ref{th:topk}, the number of nodes that send a message in expectation in each round is $(1-\phi)/\phi$ (for $h < \log_{\nicefrac{1}{\phi}}(N)$).
	Thus, in each communication round there are $\Omega(\frac{1}{\phi})$ nodes in expectation that send a message, such that after $\bigO(\phi \cdot k)$ rounds in expectation the \topkprotocol terminates.
\end{proof}

Note that our bounds describe a trade off between the number of messages and communication rounds, where the number of messages decreases with a small success probability $1-\phi$. 
Intuitively speaking, this stems from more larger resulting height values such that the search structure has a smaller breadth.

\begin{corollary}
	\label{co:topk-opt}
	For $N = n$, $\phi \coloneqq \frac{1}{2}$, and $h_{max} \coloneqq \log (n)$, the \topkprotocol uses an amount of  $k+\log (n) + 1$ number of messages in expectation and $\bigO(k + \log (n))$ communication rounds.
\end{corollary}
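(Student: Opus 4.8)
The plan is to derive this corollary directly by substituting the chosen parameters into \cref{th:topk} and \cref{le:topk-rounds}, so the work is essentially bookkeeping with the logarithm base. First I would check that the hypotheses of \cref{th:topk} are met: we take $N = n$, and (in the regime of interest) $n > k$, so $N > k$ holds; moreover with $\phi = \tfrac12$ we have $\log_{\nicefrac{1}{\phi}}(N) = \log_2(n) = \log(n)$, so the requirement $h_{max} \ge \log_{\nicefrac{1}{\phi}}(N)$ is satisfied with equality by the choice $h_{max} = \log(n)$. (If instead $k \ge n$, the statement is trivial since the protocol simply reports all observed items.)

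Next I would plug $\phi = \tfrac12$ into the message bound from \cref{th:topk}. Since $\tfrac{1-\phi}{\phi} = \tfrac{1/2}{1/2} = 1$ and $\log_{\nicefrac{1}{\phi}}(N) = \log(n)$, the bound
\[
	k + \frac{1-\phi}{\phi}\,\log_{\nicefrac{1}{\phi}}(N) + 1
\]
collapses to exactly $k + \log(n) + 1$ messages in expectation, which is the claimed quantity. For the round complexity I would substitute the same parameters into \cref{le:topk-rounds}: its bound $\bigO(\phi \cdot k + h_{max})$ becomes $\bigO(\tfrac12 k + \log(n)) = \bigO(k + \log(n))$, absorbing the constant factor.

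Honestly, there is no real obstacle here: the corollary is a pure specialization, and the only point requiring any care is the convention that $\log$ denotes the base-$2$ logarithm, so that $\log_{\nicefrac{1}{\phi}}(n)$ with $\phi=\tfrac12$ equals $\log(n)$ on the nose and the leading constant on the $\log(n)$ term is precisely $1$. Once that is noted, the two displayed substitutions give both asserted bounds immediately.
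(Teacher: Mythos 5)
Your proposal is correct and matches the paper's (implicit) derivation exactly: the corollary is stated without proof precisely because it is the direct specialization of \cref{th:topk} and \cref{le:topk-rounds} to $N=n$, $\phi=\tfrac12$, $h_{max}=\log(n)$, which is what you carry out. Your checks that the hypotheses $N>k$ and $h_{max}\ge\log_{\nicefrac{1}{\phi}}(N)$ hold, and that $\tfrac{1-\phi}{\phi}=1$ makes the leading constant on $\log(n)$ exactly $1$, are the only points of substance and they are handled correctly.
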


\section{One Shot Computation: Approximate $k$-Select}
\label{sec:kselect}
In this section we present an algorithm which gives an $(\varepsilon, \delta)$-approximation for the $k$-Select Problem, i.e.\ a data item $d$ is identified with a rank between $(1-\varepsilon)k$ and $(1+\varepsilon)k$ with probability at least $1-\delta$. 

In \cref{se:CoFaSel}, we introduce an algorithm which identifies a data item with rank $\Theta(k)$.
This is done to reduce the number of messages for the algorithm proposed in \cref{se:aproxkSelect}
which uses a standard sampling technique to achieve the desired approximation.

\subsection{Constant Factor Approximation}
\label{se:CoFaSel}
The following algorithm employs a similar strategy as \cref{alg:topk}. 
However, the protocol terminates and outputs a data item as soon as the targeted height of $h_{min}$ is reached.
This data item is one of the responses on height $h_{min}$, dependent on the value of $\phi$.
Note that it may not be sufficient to output the smallest value, since the number of responses may be very large if $\phi$ is small.

\begin{algorithm}[ht]
	\begin{enumerate}
		\item Each node $i$ defines a random variable $h_i$,
		i.i.d.\ drawn from \\
		a geometric distribution with $p = (1-\phi)$, and redefines $h_i \coloneqq \min\{h_i, h_{max}\}$.
		\item Server defines $d_{min} \coloneqq \infty$. \hfill $\triangleright$ $\forall$ data items d: $\infty > d$
		\item Server defines $0 < \alpha < 1$, s.th.\ $\lfloor \log_{\nicefrac{1}{\phi}} (7k) \rfloor = \log_{\nicefrac{1}{\phi}} (7k) - \alpha \in \mathbb{N}$ holds.
		\item \textbf{for} $h \coloneqq h_{max}$ \textbf{to} $h_{min} = \log_{\nicefrac{1}{\phi}} ( 7 \, k )  - \alpha + 1$ \textbf{do} 
		\item \quad Server probes all nodes $i$ with $d_i < d_{min}$ and $h_i = h$.
		\item \quad Let $r_1 < r_2 < \ldots < r_j$ be the responses, ordered by their values.
		\item \quad \textbf{If} $h > h_{min}$ \textbf{then} Server redefines $d_{min} \coloneqq r_1$ \textbf{else} $d_{min} \coloneqq r_{(\nicefrac{1}{\phi})^\alpha}.$
		\item Output $d_{min}$
	\end{enumerate}
	\caption{\CoFaSel\!\!($h_{max},  \phi, k$) \hfill (ConstantFactorSelect)}
	\label{alg:CoFaSel}
	\vspace{-3mm}
\end{algorithm}

We show that \cref{alg:CoFaSel} outputs a data item with a rank larger than $k$ and smaller than $42 \ k$ with constant success probability in \cref{le:cofaselConst}. 
Furthermore, we state that \cref{alg:CoFaSel} determines a data item which is at most by a polylogarithmic factor larger than the expectation with high probability in \cref{le:cofaselBadError}.
We need this result later when reusing the protocol in Section~\ref{sec:DS}. 
An upper bound on the number of used messages is presented in \cref{le:cofaselApproxRounds}. 
We shortly state how to amplify the success probability to $1-\delta'$ for a given $\delta' > 0$ in \cref{th:CoFaSel}. 

\begin{lemma}
	\label{le:cofaselConst}
	The \CoFaSel Protocol outputs a data item $d$ with $rank(d)\in[k,42k]$ with probability at least $0.6$.
\end{lemma}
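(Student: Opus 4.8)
The plan is to track, for each height $h$ in the loop, the rank of the current value $d_{min}$, and show that after the loop reaches $h_{min}$ the output lands in $[k, 42k]$ with constant probability. The key quantity is the following: at the moment the server processes height $h$, let $X_h$ be the number of nodes with $d_i < d_{min}$ (the current prefix of the sorted order that is still ``active'') and $h_i = h$. Because the $h_i$ are i.i.d.\ geometric and, crucially, are independent of the ranks of the data items, the set of active nodes of height exactly $h$ behaves like independent coin flips with success probability $p_h = \phi^{h-1}(1-\phi)$ over the active prefix. So conditioned on the active prefix having length $L$, the new $d_{min}$ (the smallest response on height $h$, for $h > h_{min}$) has rank distributed like the position of the first success in $L$ Bernoulli($p_h$) trials. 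The high-level behaviour is therefore that each iteration shrinks the active prefix by roughly a factor $\phi$, and after $h_{max} - h_{min}$ iterations the prefix length has dropped from $n$ to $\Theta((1/\phi)^{h_{min}}) = \Theta(7k \cdot (1/\phi)^{-\alpha}\cdot\ldots)$, i.e.\ to $\Theta(k)$ in expectation.

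First I would set up notation: let $L_h = rank(d_{min})$ just before height $h$ is processed (so $L_{h_{max}+1} = n$, and the output is $L_{h_{min}}$ after the special last step). I would show the recursion $\E[L_{h-1} \mid L_h] \le \phi L_h + O(1)$ for $h > h_{min}$: given the active prefix of length $L_h$, the rank of the smallest height-$h$ node among the first $L_h$ is a geometric-type random variable with mean $\approx 1/p_h$, but what matters is that it partitions the prefix — actually the cleaner statement is $\E[\text{new rank} \mid L_h] = p_h \cdot \sum \text{(something)}$, giving an expected shrink factor of $\phi$ per level once one is careful that it is the \emph{first} success, not a uniform one. Iterating from $h_{max}$ down to $h_{min}+1$ gives $\E[L_{h_{min}+1}] \approx n \cdot \phi^{h_{max}-h_{min}}$; since $h_{max} \ge \log_{1/\phi}(N)$ this is $\approx N\cdot \phi^{h_{max}} \cdot \phi^{-h_{min}}$, and plugging $h_{min} = \log_{1/\phi}(7k) - \alpha + 1$ yields $\E[L_{h_{min}+1}] = \Theta(7k \phi^{-\alpha+1})\cdot(\text{lower-order})$. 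Then the final step takes the $(1/\phi)^\alpha$-th smallest response rather than the smallest, which multiplies the expected rank by a controlled factor so that the expectation lands around $7k$ — the choice of the constants $7$ and $42 = 6\cdot 7$ and of $\alpha$ is exactly tuned so that a Markov/Chebyshev argument on $L_{h_{min}}$ confines it to $[k,42k]$ with probability $\ge 0.6$.

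Second I would turn the expectation bounds into the claimed probability bound. The lower bound $rank(d) \ge k$ should be the easy direction: the output is the $(1/\phi)^\alpha$-th smallest among the height-$h_{min}$ responses in a prefix that still has length $\ge$ (something), and one argues that with good probability the prefix did not shrink \emph{below} $k$ — this is a one-sided deviation of the product of shrink factors, handled by showing $\Pr[L_{h_{min}} < k]$ is small via a union bound over levels or via the fact that each level's shrink factor has a lower tail that is geometric. The upper bound $rank(d) \le 42k$ follows from Markov applied to $\E[L_{h_{min}}] = \Theta(7k)$ together with the fact that taking the $(1/\phi)^\alpha$-th order statistic concentrates (its variance is small relative to its mean). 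Combining the two failure probabilities and the possible event that some loop iteration has no response (handled by choosing $h_{max}$ large enough, as in \cref{th:topk}), a union bound leaves success probability at least $0.6$.

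The main obstacle I expect is getting the constants to actually work out: the argument that ``each level shrinks the active prefix by a factor $\phi$ in expectation'' is clean, but the output is not the smallest response — it is the $(1/\phi)^\alpha$-th smallest at level $h_{min}$, and $\alpha$ is a fractional correction chosen so that $7k(1/\phi)^{-\alpha}\cdot(1/\phi)^{\alpha}$-type products line up. Making the two-sided concentration tight enough (so the bad mass on each side is well under $0.2$) while the per-level random variables are only geometrically-tailed, not bounded, is the delicate part; I would likely need a variance computation for the product of per-level shrink factors, or a direct argument that $L_{h_{min}}$ is a sum/product that concentrates, rather than a black-box Chernoff bound. The independence of $h_i$ from the data ranks is what makes any of this possible and must be invoked carefully at each conditioning step.
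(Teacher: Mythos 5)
Your proposal takes a fundamentally different route from the paper, and the route as described has a genuine gap. The paper's proof rests on one structural observation that you are missing: since every level $h>h_{min}$ keeps the \emph{smallest} response as the new $d_{min}$, an easy induction shows that after processing level $h$ the value $d_{min}$ is the minimum over all nodes with $h_i\geq h$; consequently the responses at level $h_{min}$ together with the current $d_{min}$ are exactly the smallest elements of the set $\{d_i : h_i\geq h_{min}\}$, and the output $r_{(\nicefrac{1}{\phi})^\alpha}$ is precisely the $(\nicefrac{1}{\phi})^\alpha$-th smallest data item of that set. This collapses the entire multi-round process into a single static question about a random subset in which each node appears independently with probability $\phi^{h_{min}-1}=(7k)^{-1}\phi^{-\alpha}$. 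The two failure events then become binomial tails: $rank(d)<k$ forces at least $(\nicefrac{1}{\phi})^\alpha$ of the $k$ smallest items into the subset, i.e.\ a $\mathrm{Bin}(k,\phi^{h_{min}-1})$ variable with mean $\frac{1}{7}(\nicefrac{1}{\phi})^\alpha$ exceeding seven times its mean; $rank(d)>42k$ forces fewer than $(\nicefrac{1}{\phi})^\alpha$ of the $42k$ smallest items into the subset, i.e.\ a $\mathrm{Bin}(42k,\phi^{h_{min}-1})$ variable with mean $6(\nicefrac{1}{\phi})^\alpha$ falling below one sixth of its mean. Two Chernoff bounds give $\leq\frac{1}{5}$ each, hence success probability $\geq 0.6$. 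No statement about intermediate levels is needed.

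The level-by-level recursion you propose does not substitute for this. First, the step you explicitly defer --- two-sided concentration of $L_{h_{min}}$ obtained as a product of per-level shrink factors --- is not just ``delicate'': each intermediate rank $L_h$ (the rank of the minimum over nodes of height $\geq h$) is only geometrically distributed, so it fluctuates by constant factors with constant probability at \emph{every} level, and a product of $\Theta(\log_{\nicefrac{1}{\phi}}(N/k))$ independent noisy factors does not concentrate within a constant factor. Markov applied to $\E[L_{h_{min}}]\approx 7k$ does handle $\Pr[rank>42k]\leq\frac{1}{6}$, but gives nothing for the lower tail $\Pr[rank<k]$, which is exactly where the choice of the $(\nicefrac{1}{\phi})^\alpha$-th order statistic (rather than the minimum) matters and where you would need the order-statistic characterization anyway. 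Second, the expectation recursion $\E[L_{h-1}\mid L_h]\leq\phi L_h+O(1)$ is never actually established, and the role of $\alpha$ (chosen only to make $\log_{\nicefrac{1}{\phi}}(7k)-\alpha$ integral) is left as a hand-waved ``controlled factor''. The repair is to abandon the dynamic view entirely: identify the output as a fixed order statistic of an i.i.d.\ thinned set, and bound the two binomial tails directly as the paper does.
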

\begin{proof}
	The algorithm outputs the $(1/\phi)^\alpha$ smallest data item $d$ the server gets as a response on height $h = h_{min}$. 
	To analyze its rank simply consider the random number $X$ of nodes $i$ that observed smaller data items $d_i < d$. 
	The claim follows by simple calculations:
	(i) $\Pr[X < k] \leq \frac{1}{5}$ and (ii) $\Pr[42 k > X] \leq \frac{1}{5}$.
	
	The event that $X$ is (strictly) smaller than $k$ holds, if there are $(1/\phi)^\alpha$ out of $k$ nodes  with a random height at least $h_{min}$.  
	Let $X_1$ be drawn by a binomial distribution $Bin(n = k, p = \phi^{h_{min}-1})$. 
	It holds $\E[X_1] = k \cdot \phi^{h_{min}-1} = \frac{1}{7} \cdot (\frac{1}{\phi})^\alpha$.
	Then, $\Pr[X < k] \leq \Pr[ X_1 \geq (\frac{1}{\phi})^\alpha  ] = \Pr[X_1 \geq (1+6) \frac{1}{7 \phi^\alpha}] 
	\leq \exp(-\frac{1}{3} \frac{1}{7 \phi^\alpha} 6^2) \leq \frac{1}{5}$.
	
	On the other hand, the event that $X$ is (strictly) larger than $42k$ holds, if there are less than $(1/\phi)^\alpha$ out of $42k$ nodes with a random height of at least $h_{min}$. 
	Let $X_2$ be drawn by a binomial distribution $Bin(n = 42k, p = \phi^{h_{min}-1})$.
	It holds $\E[X_2] = (42 k) \phi^{h_{min}-1} = (42 k ) (7 k ) ^{-1} \phi ^{-\alpha} = \frac{6}{\phi^\alpha}$.
	Then, $\Pr[X > 42k] \leq \Pr[X_2 < \frac{1}{\phi^\alpha}]  = \Pr[X_2 < (1-(1-\frac{1}{6})) \frac{6}{\phi^\alpha}] $
	$\leq \exp(-\frac{1}{2} (\frac{6}{\phi^\alpha} (1-\frac{1}{6})^2) \leq \exp(-\frac{25}{12}) \leq \frac{1}{5}$.
\end{proof}

\begin{lemma}
	\label{le:cofaselBadError}
	For a given constant $c>8$ there exist constants $c_1,c_2>1$, such that the \CoFaSel Protocol as given in \cref{alg:CoFaSel} outputs a data item $d$ with a rank
	in $[log^{c_1}(n)\cdot 7k,log^{c_2}(n)\cdot 7k]$ with probability at least $1-n^{-c}$.
\end{lemma}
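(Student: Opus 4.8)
The plan is to run the same reduction to binomial tails as in the proof of \cref{le:cofaselConst}, but to trade its constant-size window around $7k$ for a polylogarithmic one in exchange for an $n^{-c}$ failure probability. The structural fact I would build on is an invariant of \cref{alg:CoFaSel}, verified by a straightforward downward induction on $h$: after the loop iteration for a height $h>h_{min}$, the value $d_{min}$ is exactly the smallest data item held by a node whose (capped) height is at least $h$. Writing $Y_h$ for its rank and using that heights are drawn independently of the data, with $\Pr[h_i\ge h]=\phi^{h-1}$ for every $h\le h_{max}$, one gets $\Pr[Y_h>t]=(1-\phi^{h-1})^t$, i.e.\ $Y_h$ is geometric with success probability $\phi^{h-1}$. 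For $h=h_{min}+1$ the defining choice of $h_{min}$ yields $\phi^{h_{min}}=\phi^{1-\alpha}/(7k)$ --- the identity already behind $\E[X_1]=\frac{1}{7}(1/\phi)^\alpha$ in \cref{le:cofaselConst} --- so $Y_{h_{min}+1}$ is geometric with parameter $\Theta(1/k)$. Finally, on the last iteration the output $d$ is the $(1/\phi)^\alpha$-th smallest node of height exactly $h_{min}$ among those of rank below $Y_{h_{min}+1}$ (or, if fewer than $(1/\phi)^\alpha$ such responses exist, the running minimum $d_{min}$ itself), so in every case $rank(d)<Y_{h_{min}+1}$.

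With this in hand, the upper end of the interval is immediate: by the geometric tail, $\Pr[Y_{h_{min}+1}>7k\log^{c_2}(n)]\le\exp(-7k\log^{c_2}(n)\cdot\phi^{h_{min}})=\exp(-\phi^{1-\alpha}\log^{c_2}(n))$, and for any fixed $c_2>1$ this is at most $n^{-c}$ once $n$ is large enough, since $\log^{c_2}(n)=\omega(\ln n)$. Hence $rank(d)<7k\log^{c_2}(n)$ with probability at least $1-n^{-c}$. For the lower end I would bound $rank(d)$ from below by the rank of the $(1/\phi)^\alpha$-th smallest node of height $h_{min}$: among the first $L$ nodes in rank order that number is $Bin(L,\phi^{h_{min}-1}(1-\phi))$, with expectation $\Theta(L/k)$, so taking $L$ to be the lower endpoint of the stated window makes this expectation at most $1/\log^{c_1}(n)$, and a multiplicative Chernoff estimate then bounds the probability that it reaches the constant $(1/\phi)^\alpha$; the same geometric tail, used from below, keeps $L<Y_{h_{min}+1}$. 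A union bound over the two directions finishes the argument, with $c_1,c_2>1$ --- allowed to depend on $c$ --- chosen large enough to absorb the constants depending on $\phi$, $\alpha$ and $(1/\phi)^\alpha$ and the resulting threshold on $n$.

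I expect the lower-bound direction to be the main obstacle. Because the count $(1/\phi)^\alpha$ that the protocol waits for at height $h_{min}$ is only a constant, the concentration that a single binomial with a constant target provides is limited, so the estimate has to be organised so that it is the polylogarithmic deviation factor --- not that constant --- that drives the tail, and the split point between the two halves of the window must be placed accordingly. One also has to handle separately the (not rare) event that fewer than $(1/\phi)^\alpha$ responses arrive on height $h_{min}$: there the protocol falls back to the running minimum $d_{min}$, whose rank is still below $Y_{h_{min}+1}$, so the upper bound is untouched and only the lower bound needs a dedicated argument in that regime. The remaining check is routine --- capping $h_i$ at $h_{max}$ leaves $\Pr[h_i\ge h]=\phi^{h-1}$ intact for all $h\le h_{max}$, provided $h_{max}\ge\log_{1/\phi}(n)$, the regime in which the protocol is later invoked --- so the distribution of $Y_{h_{min}+1}$ used above is exact.
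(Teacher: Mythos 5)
Your upper-bound argument is sound and is essentially the paper's own proof in different clothing: the paper bounds $\Pr[rank(d)>7k\log^{c}(n)]$ by the lower tail of $Bin(7k\log^{c}(n),\phi^{h_{min}-1})$ (the number of height-$\geq h_{min}$ nodes among the $7k\log^{c}(n)$ smallest), whereas you use the exponential tail of the geometric rank $Y_{h_{min}+1}$ of the smallest node of height at least $h_{min}+1$ together with the observation $rank(d)\leq Y_{h_{min}+1}$. Both routes give $\exp(-\Theta(\log^{c_2}(n)))\leq n^{-c}$ once $c_2>1$, and your bookkeeping (the identity $\phi^{h_{min}}=\phi^{1-\alpha}/(7k)$, the capping of $h_i$ at $h_{max}$) is correct. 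Note that this one-sided bound is all the paper itself proves, and all it ever uses later (the remark before \cref{co:approxkSelect} and the proof of \cref{th:approxkSelect} invoke the lemma purely as ``$rank(d)\leq 7k\cdot\polylog(n)$ w.h.p.'').

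The lower-bound direction, which you flag as the main obstacle, is not an obstacle you can overcome: as literally stated it is false. \cref{le:cofaselConst} gives $\Pr[rank(d)\leq 42k]\geq 0.6$, and $42k<7k\log^{c_1}(n)$ for any $c_1>0$ and large $n$, so the output falls below the claimed lower endpoint with constant probability, not probability $n^{-c}$. Your sketch also contains an arithmetic slip that masks this: with $L=7k\log^{c_1}(n)$ the expected number of height-$\geq h_{min}$ nodes among the $L$ smallest is $(1-\phi)\phi^{-\alpha}\log^{c_1}(n)$, which is large, not ``at most $1/\log^{c_1}(n)$''; so the $(1/\phi)^{\alpha}$-th response arrives well before rank $L$ with overwhelming probability. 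Even under the only reading that makes your expectation computation correct (lower endpoint $7k/\log^{c_1}(n)$), the achievable bound on $\Pr[Bin(L,p)\geq(1/\phi)^{\alpha}]$ with mean $\mu=\Theta(\log^{-c_1}(n))$ and a \emph{constant} threshold is of order $\mu^{(1/\phi)^{\alpha}}=\log^{-\Theta(c_1)}(n)$ --- polylogarithmically, not polynomially, small --- which is exactly the limitation you anticipate. The lemma should be read as the one-sided statement $rank(d)\leq 7k\log^{c_2}(n)$ with probability $1-n^{-\Theta(c)}$; for that statement your proof is complete and matches the paper's.
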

\begin{proof}
	We use the same simple argumentation as in Lemma~\ref{le:cofaselConst}, but instead consider a larger amount of nodes that participate in the binomial experiment. 
	Let $X$ denote the rank of the data item $d$ which is identified by \cref{alg:CoFaSel}, and let $Y$ be drawn by $Bin(n = 7k \log^c (n), p = \phi^{h_{min}-1})$.
	Observe that $\E[Y] = 7k\cdot \log^c (n) \cdot (\phi^{h_{min}-1}) = 7k\cdot \log^c (n) (7k)^{-1} \phi^{-\alpha} = \log^c (n) \phi^{-\alpha}$ holds and thus,
	$\Pr[X > 7k \log^c (n)] \leq \Pr[Y < \frac{1}{\phi^\alpha}] \leq \Pr[Y < (1-(1-\frac{1}{\log^c (n)})) \log^c (n) \phi^{-\alpha}] \leq \exp(-\frac{1}{2} \log^c (n) \phi^{-\alpha} (1-\frac{1}{\log^c (n)})^2) \leq \exp(-\frac{1}{8} \log^c (n)) \leq n^{-\frac{1}{8}c}$.
\end{proof}

\begin{lemma}
	\label{le:cofaselApproxRounds}
	Let $N>k$ and $h_{max} \geq \log_{\nicefrac{1}{\phi}} (N)$ hold. 
	The \CoFaSel Protocol presented in \cref{alg:CoFaSel} uses an amount of at most $\bigO ( \frac{1}{\phi} \ (\log_{\nicefrac{1}{\phi}} ( \frac{N}{k}) + 1) )$ messages in expectation.
\end{lemma}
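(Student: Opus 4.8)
The plan is to bound the expected number of messages, which equals the expected total number of responses received by the server over the loop iterations $h=h_{max},h_{max}-1,\dots,h_{min}$; writing $R_h$ for the number of responses at height $h$, it suffices to bound $\sum_{h=h_{min}}^{h_{max}}\E[R_h]$. The key structural observation is that \emph{each node responds at most once}: all responders at a given height have data value at least the new value of $d_{min}$ (the smallest response at that height), hence rank at least that of the new $d_{min}$, so none of them lies in the window $\{i:d_i<d_{min}\}$ probed in any later (lower) iteration. I would then split the heights at $\mathcal{H}\coloneqq\log_{\nicefrac{1}{\phi}}(N)$. For the ``high'' heights $h\ge\lceil\mathcal{H}\rceil$, each response comes from a node of (uncapped) height at least $\lceil\mathcal{H}\rceil\ge\mathcal{H}$ (for $h=h_{max}$ the cap only strengthens this to uncapped height $\ge h_{max}\ge\mathcal{H}$); since a node responds at most once, the total number of such responses is at most the number of nodes of height $\ge\lceil\mathcal{H}\rceil$, whose expectation is at most $N\phi^{\mathcal{H}-1}=\nicefrac{1}{\phi}$. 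For the ``middle'' heights $h_{min}\le h\le\lceil\mathcal{H}\rceil-1$ there are at most $\mathcal{H}-h_{min}+1\in\bigO(\log_{\nicefrac{1}{\phi}}(\nicefrac{N}{k})+1)$ of them (and none if this quantity is non-positive), and it remains to prove $\E[R_h]=\bigO(\nicefrac{1}{\phi})$ for each such $h$; adding the two regimes then yields the claimed bound.

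The estimate $\E[R_h]=\bigO(\nicefrac{1}{\phi})$ for $h<\mathcal{H}$ is the heart of the argument and mirrors the geometric-sequence/geocoin reasoning of \cref{th:topk}, specialised to the fact that \cref{alg:CoFaSel} only ever descends the leftmost branch. Ordering the nodes by rank as $v_1,v_2,\dots$, I would prove by downward induction on $h$ the invariant that, when height $h$ is processed, the active window is a rank-prefix $\{v_1,\dots,v_{A_h}\}$, each of whose nodes has (capped) height $\le h$, and --- conditioned on the window being exactly this prefix --- the heights of $v_1,\dots,v_{A_h}$ are i.i.d.\ geometric variables truncated to $\{1,\dots,h\}$. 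Consequently $R_h\mid A_h\sim\mathrm{Bin}(A_h,q_h)$ with $q_h=\frac{\phi^{h-1}(1-\phi)}{1-\phi^{h}}=\Pr[\,\text{height}=h\mid\text{height}\le h\,]$, while $A_{h-1}$ is the number of window nodes preceding the first one of height $h$, so $A_{h-1}$ is stochastically dominated by a geometric variable of parameter $q_h$ and $\E[A_h]\le\nicefrac{1}{q_{h+1}}\le\phi^{-h}/(1-\phi)$. Combining, $\E[R_h]\le\E[A_h]\,q_h\le\frac{1}{\phi\,(1-\phi^{h})}\le\frac{7}{6\phi}$, where the last inequality uses $\phi^{h}\le\phi^{h_{min}}\le\nicefrac{1}{7k}\le\nicefrac{1}{7}$.

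The step I expect to be the main obstacle is making the conditional-independence part of the invariant rigorous: one must argue that at the moment height $h$ is processed, the algorithm has revealed nothing about the current window nodes beyond ``height $\le h$'', so their heights remain i.i.d.\ truncated geometrics even after conditioning on all higher-level responses (which involve only nodes lying outside the current window); once this is in place, the binomial/geometric estimates and the final summation are routine. One harmless point still to dispatch is the last iteration $h=h_{min}$, where the algorithm keeps the $(\nicefrac{1}{\phi})^{\alpha}$-th smallest response instead of the smallest --- this only affects which item is output, not how many nodes respond at height $h_{min}$, so the message count is unaffected.
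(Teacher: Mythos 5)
Your proposal is correct and follows essentially the same route as the paper, whose one-line proof simply invokes the per-height expected bound of $\bigO(\nicefrac{1}{\phi})$ messages from \cref{th:topk} and multiplies by the $h_{max}-h_{min}+1$ iterations. Your write-up is in fact more careful: it makes the conditional-independence/geometric-window reasoning behind the $\bigO(\nicefrac{1}{\phi})$ per-iteration bound explicit, and it handles the case $h_{max} > \log_{\nicefrac{1}{\phi}}(N)$ (which the paper's iteration count glosses over) by charging all high heights to the at most $\nicefrac{1}{\phi}$ expected nodes of height at least $\log_{\nicefrac{1}{\phi}}(N)$.
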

\begin{proof}
	Consider one instance of \CoFaSel and applying arguments from \cref{th:topk}, the algorithm uses $\frac{1}{\phi}$ messages in expectation for each iteration of Steps~4 to 6.
	Taking $h_{max}-h_{min} + 1$ repetitions of Steps~4. to 6., and an expected amount of $\frac{1}{\phi}$ messages per repetition into account, the total number of messages follows as claimed.
\end{proof}

We apply a standard boosting technique, i.e.\ we use $ \bigO(\log (\frac{1}{\delta'}))$ independent instances of \cref{alg:CoFaSel}, and consider the median of the outputs of all instances to be the overall output.
We denote this amplified version of \CoFaSel by \emph{\CoFaSelAmp}. 
Thus, an output in the interval $[k, 42\,k]$ with probability at least $1-\delta'$ is determined.

Since we run the $\bigO( \log (\frac{1}{\delta'}))$ instances in parallel, and the server is able to process all incoming messages within the same communication round,
the number of communication rounds does not increase by this extension of the protocol.
These simple observations lead to the following theorem summarizing a first result for the $k$-select problem:

\begin{theorem}
\label{th:CoFaSel}
Let $N>k$ and $h_{max} \geq \log_{\nicefrac{1}{\phi}} (N)$ hold. 
Let $\delta'$ be a given constant.
The algorithm \CoFaSelAmp determines a data item $d$ with rank at least $k$ and at most $42 k$ with probability at least $1-\delta'$ using $\bigO ( \frac{1}{\phi} \log_{\nicefrac{1}{\phi}} (\frac{N}{k}) \, \log (\frac{1}{\delta'}) )$ messages in expectation and $h_{max}-h_{min} + 2$ communication rounds. 
\end{theorem}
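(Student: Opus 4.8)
The plan is to obtain Theorem~\ref{th:CoFaSel} as a routine amplification of Lemmas~\ref{le:cofaselConst} and~\ref{le:cofaselApproxRounds} via the median trick. Run $t \coloneqq \Theta(\log(\nicefrac{1}{\delta'}))$ mutually independent copies of \CoFaSel (each with its own freshly drawn heights $h_i$), all executed in parallel, let $d^{(1)},\dots,d^{(t)}$ be their outputs, and let $d$ be the output of median value. First I would bound the failure probability. By Lemma~\ref{le:cofaselConst}, for each copy $\Pr[rank(d^{(j)}) < k] \le 0.4$ and $\Pr[rank(d^{(j)}) > 42k] \le 0.4$, since both are sub-events of the single event that $rank(d^{(j)})$ leaves $[k,42k]$. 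Because data items carry a total order with unique ranks, if $d$ is (say) the $\lceil t/2\rceil$-th smallest output, then $rank(d) < k$ forces at least $t/2$ of the copies to have produced an item of rank $< k$, i.e.\ at least $t/2$ successes among $t$ independent Bernoulli trials of success probability $\le 0.4$; a Chernoff bound drives this probability below $\nicefrac{\delta'}{2}$ once the constant hidden in $t$ is chosen large enough. The symmetric argument bounds $\Pr[rank(d) > 42k]$ by $\nicefrac{\delta'}{2}$, and a union bound yields $rank(d)\in[k,42k]$ with probability at least $1-\delta'$.

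Next I would account for the communication. By linearity of expectation the expected total number of messages is $t$ times the per-copy bound of Lemma~\ref{le:cofaselApproxRounds}, i.e.\ $\bigO\!\big(\tfrac{1}{\phi}\,\log_{\nicefrac{1}{\phi}}(\tfrac{N}{k})\,\log(\tfrac{1}{\delta'})\big)$; the additive constant appearing inside the per-copy bound is dominated once $\nicefrac{N}{k}$ is bounded away from $1$, and can otherwise simply be carried along. For the round complexity, the only interaction with the nodes consists of the $h_{max}-h_{min}+1$ probe rounds of Steps~4--6 of \cref{alg:CoFaSel} together with the single broadcast needed to announce the output, and since all $t$ copies are driven synchronously — the server collecting every response belonging to the same height $h$ in one round — the parallel execution adds no rounds. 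Hence the protocol terminates within $h_{max}-h_{min}+2$ rounds, as claimed.

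There is essentially no hard step here: the statement just packages the earlier lemmas with the standard amplification. The one point that needs care is the median-of-intervals argument, where one must amplify the two one-sided events (``rank too small'' and ``rank too large'') separately rather than amplifying the single two-sided event $rank(d^{(j)})\notin[k,42k]$: the median of points that are each individually outside an interval need not itself be outside it, so a direct amplification of the two-sided event is not valid. Keeping the two tails apart, each with its own Chernoff estimate and then combining by a union bound, is what makes the argument go through.
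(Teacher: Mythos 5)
Your proposal is correct and follows essentially the same route as the paper, which simply invokes the standard median-of-$\bigO(\log(\nicefrac{1}{\delta'}))$-independent-copies boosting on top of Lemmas~\ref{le:cofaselConst} and~\ref{le:cofaselApproxRounds}, with the instances run in parallel so that the round count is unaffected and the message bound follows by linearity of expectation. Your additional care in amplifying the two one-sided tail events separately is a correct (and worthwhile) filling-in of a detail the paper leaves implicit.
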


\subsection{Approximate $k$-Select}
\label{se:aproxkSelect}
In this section we propose an algorithm which is based on the algorithm from the previous section. 
Here, we aim for an $(\varepsilon, \delta)$-approximation of the $k$-Selection problem for a single time step. 
Using the approximation given by \CoFaSelAmp, which gives a data item $d$ with a rank between $k$ and $42k$ with probability at least $1-\delta'$, a simple standard sampling strategy is applied afterwards.
Note that only those nodes take place in this strategy which observed a data item $d_i$ smaller than $d$.  

\begin{algorithm}[ht]
\begin{enumerate}
\item Call \CoFaSelAmp\!\!$(h_{max}, \phi, k, \delta')$ and obtain data item $d$.
\item Each node $i$ with $d_i < d$:
\item \quad Toss a coin with $p \coloneqq \min \left(1, \frac{c}{k}\mathcal{S}_{\varepsilon, \delta} \right)$. 
\item \quad On success send $d_i$ to the server. 
\item The server sorts these values and outputs $d_K$, the $p\cdot k$-th smallest item.
\end{enumerate}
\caption{\aproxkSelectProtocol ApproKSel($k, \phi, \varepsilon, \delta', \delta,h_{max}$)}
\label{alg:approxkSelect}
\vspace{-3mm}
\end{algorithm}

In the following, we show that \cref{alg:approxkSelect} is an $(\varepsilon, \delta)$-approximation of the $k$-Selection protocol in \cref{th:approxkSelect}. 
We discuss a possible choice of parameters in \cref{co:approxkSelect}.

\begin{theorem}
	\label{th:approxkSelect}
	Let $N>k$ and $h_{max} \geq \log_{\nicefrac{1}{\phi}} (N)$ hold. 
The \aproxkSelectProtocol selects data item $d_K$ with rank in  $[(1-\varepsilon)\,k, (1+\varepsilon)\, k]$ with probability at least $1-\delta$ using 
$\bigO( ( 1 + \frac{\log^c (n)}{k} \delta' ) \sample  + \frac{1}{\phi} \log_{\nicefrac{1}{\phi}} (\frac{N}{k}) \log (\frac{1}{\delta'}) )$ msg.\ in exp.\ and $h_{max}-h_{min} + 3$ comm.\ rounds.
\end{theorem}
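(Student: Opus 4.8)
The plan is to combine two facts: the correctness of \CoFaSelAmp\ (Theorem~\ref{th:CoFaSel}) as a black box that hands us a pivot $d$ with $rank(d)\in[k,42k]$ with probability $\ge 1-\delta'$, and a Chernoff-type argument showing that the sampling step (Steps~2--5) correctly identifies an element of rank in $[(1-\varepsilon)k,(1+\varepsilon)k]$ with probability $\ge 1-\delta$, provided the sampling rate $p=\min(1,\frac{c}{k}\sample)$ is chosen appropriately. I would first fix notation: let $R=rank(d)$ be the (random) rank of the pivot returned in Step~1, and condition on the good event $\mathcal{E}=\{k\le R\le 42k\}$, which has probability $\ge 1-\delta'$. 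Under $\mathcal{E}$, exactly $R$ nodes participate in Steps~2--4, each forwarding its value independently with probability $p$. For a target rank $t$, the number of forwarded items of rank $\le t$ is a sum of independent Bernoulli variables with mean $pt$; the server's output $d_K$ is the $pk$-th smallest forwarded item, so $rank(d_K)>t$ iff fewer than $pk$ of the first $t$ ranked items were sampled.

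The core estimates are the two tail bounds: (i) $\Pr[rank(d_K) < (1-\varepsilon)k]$ is at most the probability that $\ge pk$ of the $(1-\varepsilon)k$ smallest items are sampled, i.e. a binomial with mean $p(1-\varepsilon)k$ deviates above $pk=(1+\frac{\varepsilon}{1-\varepsilon})\cdot p(1-\varepsilon)k$; and (ii) $\Pr[rank(d_K) > (1+\varepsilon)k]$ is at most the probability that a binomial with mean $p(1+\varepsilon)k$ falls below $pk$. Both are controlled by a Chernoff bound of the form $\exp(-\Theta(\varepsilon^2 pk))$; since $pk = c\sample$ and $\sample = \Theta(\frac{1}{\varepsilon^2}\log\frac{1}{\delta})$, choosing the constant $c$ large enough makes each of these at most, say, $\delta/4$. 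A union bound over (i) and (ii), plus the failure probability $\delta'$ of the pivot, gives total failure $\le \delta' + \delta/2$; I would absorb $\delta'$ into $\delta$ (e.g. take $\delta'\le\delta/2$, or note $\delta'$ is treated as a separate constant parameter and the statement as written carries $\delta'$ through the message bound rather than the success probability — I'd match whichever convention the surrounding text uses). One subtlety to handle cleanly: if $p=1$ (when $\frac{c}{k}\sample\ge 1$, i.e. $k=O(\sample)$) the sampling is trivial and the server sees all $R\le 42k$ relevant items and simply outputs the $k$-th smallest exactly; this corner case should be dispatched in one line.

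For the message complexity: Step~1 costs $\bigO(\frac{1}{\phi}\log_{\nicefrac{1}{\phi}}(\frac{N}{k})\log\frac{1}{\delta'})$ by Theorem~\ref{th:CoFaSel}. For Steps~2--4, the expected number of forwarded messages is $p\cdot\E[R]$. On $\mathcal{E}$ this is $\le 42pk = \bigO(\sample)$; off $\mathcal{E}$ we must bound $\E[R\cdot\mathbf{1}_{\bar{\mathcal{E}}}]$, and here \cref{le:cofaselBadError} is exactly the tool — it says that except with probability $n^{-c}$ the pivot's rank is only a polylogarithmic factor $\log^{c_2}(n)\cdot 7k$ above $k$, so $\E[R\mathbf{1}_{\bar{\mathcal{E}}}] = \bigO(\log^{c_2}(n)\cdot k\cdot\delta' + n\cdot n^{-c})$, contributing the $\frac{\log^c(n)}{k}\delta'\cdot\sample$ term after multiplying by $p\le\frac{c}{k}\sample$. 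Adding the two parts yields the claimed expected bound $\bigO((1+\frac{\log^c(n)}{k}\delta')\sample + \frac{1}{\phi}\log_{\nicefrac{1}{\phi}}(\frac{N}{k})\log\frac{1}{\delta'})$. The round bound is immediate: \CoFaSelAmp\ uses $h_{max}-h_{min}+2$ rounds, and Steps~3--5 add one more round of sampling plus the output, for $h_{max}-h_{min}+3$ total.

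The main obstacle I anticipate is the off-event message accounting, i.e. bounding $\E[R\cdot\mathbf{1}_{\bar{\mathcal{E}}}]$ without it blowing up: a naive bound $R\le N$ would give a $\bigO(N\delta'/k\cdot\sample)$ term, far worse than claimed, so one genuinely needs the tail control of \cref{le:cofaselBadError} to split $\bar{\mathcal{E}}$ into the "mildly bad" regime (rank $\le \poly\log(n)\cdot k$, probability $\le\delta'$) and the "catastrophic" regime (probability $\le n^{-c}$, where even $R=N$ contributes only $o(1)$). Threading the three probability scales $\delta'$, $n^{-c}$, and the Chernoff failures together, and making sure the constants in $c$ (sampling) versus $c$ (in $\log^c(n)$, inherited from \CoFaSel) are kept notationally distinct, is the part that requires care rather than cleverness.
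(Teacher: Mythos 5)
Your proposal follows essentially the same route as the paper: \cref{th:CoFaSel} supplies the pivot together with its message and round costs, the law of total expectation split on whether $rank(d)\in[k,42k]$ combined with \cref{le:cofaselBadError} handles the sampling cost, and a standard Chernoff tail argument establishes correctness (a step the paper explicitly omits as routine). If anything, your version is slightly more careful than the paper's, both in the off-event accounting (separating the probability-$\delta'$ mildly-bad regime from the probability-$n^{-c}$ catastrophic one, where the naive bound $R\le N$ would be too weak) and in the $\delta$ versus $\delta'$ bookkeeping for the success probability.
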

\begin{proof}
	
	From \cref{th:CoFaSel} we get that the \CoFaSelAmp protocol uses at most $\bigO( \frac{1}{\phi} \log_{\nicefrac{1}{\phi}} (\frac{N}{k}) \log (\frac{1}{\delta'}))$ messages on expectation and runs for $h_{max} - h_{min} +2$ communication rounds.
	The remaining steps of \cref{alg:approxkSelect} need only one communication round and thus the stated bound on the communication rounds follows. 
	We omit the proof for the correctness of the algorithm, i.e.\ with demanded probability the $k$-th smallest data item is approximated, since it is based on a simple argument using Chernoff bounds. 
	
	It remains to show the upper bound on the number of messages used. 
	Formally, we apply the law of total expectation and consider the events that the $\CoFaSelAmp$ protocol determined a data item $d$ with rank $k \leq rank(d) \leq 42k$ and the event $rank(d) > 42k$.
	
	Observe that the sampling process in steps $2$ and $3$ yields $\bigO(\frac{rank(d_K)}{k} \sample)$ number of messages in expectation.
	Consider the event \CoFaSelAmp determined a data item $d$ with rank $k \leq rank(d) \leq 42 k$. 
	Then, the \aproxkSelectProtocol uses $\bigO(\sample)$ messages in expectation.
	Now consider the event \CoFaSelAmp determined a data item $d$ with $d > 42\, k$.
	We upper bound the number of messages used for this case by the rank of the given value $r$:
	\man{i think just data item $d$ ersetzen statt value $r$, aber nochmal durchlesen}
	It uses $\bigO\left(\frac{\log^c (n)}{k} \sample  \right)$ messages in expectation.
	Since the probability for this event is upper bounded by $\delta'$, the conditional expected number of messages is $\bigO\left( \frac{\log^c (n)}{k} \sample \cdot \delta' \right)$.
\end{proof}

For the sake of self containment we propose a bound which considers all nodes to take part in the protocol ($N = n$).
Note, that the \CoFaSelAmp protocol outputs a value with rank smaller than $7k \cdot \polylog (n)$ w.h.p.\ (c.f.\ \cref{le:cofaselBadError}). 
\begin{corollary}
	\label{co:approxkSelect}
Let $c$ be a sufficiently large constant.
Furthermore, let $N = n$, $\phi \coloneqq \frac{1}{2}$, $h_{max} \coloneqq \log n$, and $\delta' \coloneqq \frac{1}{\log ^c (n)}$. 
The protocol uses an amount of at most $\bigO(\sample + \log (n) \log (\log (n)))$ messages in expectation and $\log (\frac{n}{k})$ rounds of communication. 
\end{corollary}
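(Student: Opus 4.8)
The plan is to derive \cref{co:approxkSelect} by specialising \cref{th:approxkSelect} to $N = n$, $\phi := \nicefrac{1}{2}$, $h_{max} := \log(n)$ and $\delta' := \nicefrac{1}{\log^{c}(n)}$ and then simplifying. First I would check that the hypothesis of \cref{th:approxkSelect} is met: with $\phi = \nicefrac{1}{2}$ we have $\log_{\nicefrac{1}{\phi}}(N) = \log(n) = h_{max}$, so $h_{max} \ge \log_{\nicefrac{1}{\phi}}(N)$ holds (with equality), the theorem applies, and moreover $\nicefrac{1}{\phi} = 2 = \bigO(1)$ and $\log_{\nicefrac{1}{\phi}}(x) = \log(x)$ throughout.

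For the round complexity, \cref{th:approxkSelect} gives $h_{max} - h_{min} + 3$ rounds. Substituting $\phi = \nicefrac{1}{2}$ into the definition of $h_{min}$ in \cref{alg:CoFaSel} yields $h_{min} = \lfloor \log(7k) \rfloor + 1 \ge \log(k) + 1$, hence $h_{max} - h_{min} + 3 \le \log(n) - \log(k) + 2 = \bigO(\log(\nicefrac{n}{k}))$ (in fact $\log(\nicefrac{n}{k})$ up to a small additive constant), which is the claimed bound.

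For the message complexity I would split the bound of \cref{th:approxkSelect} into its two summands. The term $\frac{1}{\phi}\log_{\nicefrac{1}{\phi}}(\nicefrac{N}{k})\log(\nicefrac{1}{\delta'})$ becomes $2\log(\nicefrac{n}{k})\cdot\log(\log^{c}(n)) = 2c\log(\nicefrac{n}{k})\log(\log(n)) = \bigO(\log(n)\log(\log(n)))$, using $\log(\nicefrac{n}{k}) \le \log(n)$. For the term $(1 + \frac{\log^{c}(n)}{k}\delta')\sample$ I would be slightly more careful, since the exponent written ``$c$'' inside \cref{th:approxkSelect} is not the $\delta'$-exponent but the fixed polylogarithmic rank-bound exponent supplied by \cref{le:cofaselBadError} (the ``$\polylog(n)$'' remark preceding the corollary) --- call it $c_2$. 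Choosing the constant $c$ in $\delta' := \nicefrac{1}{\log^{c}(n)}$ at least as large as $c_2$ makes $\frac{\log^{c_2}(n)}{k}\delta' \le \frac{1}{k\log^{c - c_2}(n)} \le 1$, so this summand is $\bigO(\sample)$. I would then note that the only event this accounting omits is ``\CoFaSelAmp breaks its $[k,42k]$ guarantee \emph{and} also violates the $\polylog(n)$ rank bound of \cref{le:cofaselBadError}'': by a union bound over the $\bigO(\log(\nicefrac{1}{\delta'})) = \polylog(n)$ amplification instances this has probability $n^{-\Omega(c_2)}$, in which case the sampling step of \cref{alg:approxkSelect} still costs at most $\bigO(\frac{n}{k}\sample)$, contributing only $n^{1 - \Omega(c_2)}\polylog(n)\cdot\sample = o(\sample)$ to the expectation once $c > 8$. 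Adding the two summands yields $\bigO(\sample + \log(n)\log(\log(n)))$ messages in expectation.

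Everything above is mechanical substitution except one point, which I expect to be the only thing needing genuine attention: fixing the relationship between the two constants, namely taking the ``sufficiently large constant'' $c$ in $\delta'$ no smaller than the polylog rank exponent $c_2$ of \cref{le:cofaselBadError} (and with $c > 8$), so that both the $\delta'$-weighted polylog failure case and the residual $n^{-\Omega(c_2)}$-probability tail collapse into $\bigO(\sample)$.
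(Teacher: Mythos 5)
Your proposal is correct and follows the same route the paper intends: the corollary is obtained by direct substitution of $N=n$, $\phi=\nicefrac{1}{2}$, $h_{max}=\log(n)$, $\delta'=\nicefrac{1}{\log^c(n)}$ into \cref{th:approxkSelect} (the paper gives no separate proof beyond the preceding remark invoking \cref{le:cofaselBadError}). Your extra care in distinguishing the rank-bound exponent of \cref{le:cofaselBadError} from the exponent in $\delta'$, and in accounting for the residual $n^{-\Omega(1)}$-probability tail, only makes the argument tighter than the paper's implicit one.
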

This represents the case that a small number of messages and a large number of communication rounds are used. 
This observation is complemented by a lower bound of $\Omega(\log (n))$ in \cref{se:lowerBounds}.
Note, that this bound can be reduced to $\bigO(\sample + \log (n))$ by running one instance of $\CoFaSel$ until $h'_{min} \coloneqq \lceil \log(7k) \rceil + c$ and denote the output (i.e.\ the smallest data item) as $d$.
The \aproxkSelectProtocol is then applied only on nodes that observed data items smaller than $d$. 

\begin{corollary}
	\label{co:approxkSelect-topk}
	Let $N = n$, $\phi \coloneqq \frac{1}{2}$, $h_{max} \coloneqq \log n$, $k' \coloneqq 2 k$, $\epsilon \coloneqq \frac{1}{2}$ and $\delta' \coloneqq \frac{1}{\log ^c (n)}$. 
	The protocol uses $\bigO(k + \log (n))$ messages in expectation to solve the \topkproblem. 
\end{corollary}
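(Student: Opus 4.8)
The plan is to use the approximate-selection machinery as a black box to produce a \emph{pivot} of rank $\Theta(k)$, have every node below the pivot report its value directly so that the server can read off the exact top-$k$, and fall back to the exact \topkprotocol of \cref{co:topk-opt} on the (detectable) failure event; this keeps the number of direct reports down to $\bigO(k)$ while making the overall algorithm exact.

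First I would run \cref{alg:approxkSelect} with $N=n$, $\phi=\frac12$, $h_{max}=\log n$, $k'=2k$, $\varepsilon=\frac12$, $\delta'=\frac{1}{\log^c(n)}$, and a constant $\delta$, using the refinement noted after \cref{co:approxkSelect}: a single \CoFaSel instance run down to height $\lceil\log 7k\rceil+c$ first yields, with high probability and $\bigO(\log n)$ messages, a data item bounding the number of relevant nodes by $\bigO(k\,\polylog n)$, and the selection is then executed on only those nodes. By the analysis behind \cref{co:approxkSelect} this costs $\bigO(\sample+\log n)$ messages in expectation, which is $\bigO(\log n)$ since $\varepsilon$ and $\delta$ are constants and hence $\sample=\bigO(1)$. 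Let $d$ denote the value returned (its internal \CoFaSelAmp pivot is equally usable). From \cref{th:approxkSelect}, $rank(d)\in[(1-\varepsilon)k',(1+\varepsilon)k']=[k,3k]$ with probability at least $1-\delta$; from \cref{le:cofaselBadError} applied to the internal \CoFaSel calls, $rank(d)\le 7k\,\polylog n$ with high probability; and trivially $rank(d)\le n$ always. The important observation is that ``$rank(d)$ is not $\bigO(k)$'' is controlled by the small probability $\delta'$ that \CoFaSelAmp overshoots, not by the constant $\delta$.

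Then the server would broadcast $d$, and every node $i$ with $d_i\le d$ would reply with $d_i$ (one message each). If at least $k$ replies arrive, the server outputs the $k$ smallest of them: since $rank(d)\ge k$ forces every one of the $k$ globally smallest items to have value at most $d$, this is precisely the top-$k$ set. If fewer than $k$ replies arrive — which happens exactly when $rank(d)<k$, and which the server notices for free from the reply count — it discards the attempt and invokes the exact \topkprotocol of \cref{co:topk-opt}. Hence the protocol is correct with probability $1$.

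It then remains to bound the expected number of messages: $\bigO(\log n)$ for the pivot; for the reply phase, $rank(d)$ messages, whose expectation is at most $\bigO(k)$ (the probability-$(1-\delta')$ good case) plus $\delta'\cdot\bigO(k\,\polylog n)$ plus $n^{-\Omega(1)}\cdot n$, all of which is $\bigO(k)$ once $c$ is taken large enough that $\frac{1}{\log^c(n)}$ swallows the polylogarithmic overshoot and the deterministic bound makes the last term a constant; and, since the fallback fires with probability at most $\delta'$ and costs $\bigO(k+\log n)$ by \cref{co:topk-opt}, it adds only $\bigO(k+\log n)$. Summing yields $\bigO(k+\log n)$. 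The hard part is this tail bookkeeping: one must keep both a high-probability bound ($7k\,\polylog n$) and a deterministic bound ($n$) on $rank(d)$, pair each with the correct (small) failure probability, and fix $c$ large enough to absorb all polylog factors — together with the structural point that $rank(d)\ge k$ is visible from the reply count, which is what allows the failure event to be detected and repaired so that the algorithm stays exact.
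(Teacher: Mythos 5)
Your proposal is correct and follows what is evidently the intended argument: the paper itself prints no proof for this corollary (only an author note deferring it to an appendix), but your route --- run the $\bigO(\sample+\log n)$ variant of the \aproxkSelectProtocol with $k'=2k$, $\varepsilon=\tfrac12$ to get a pivot of rank in $[k,3k]$, collect all items below it, and fall back to the exact \topkprotocol when fewer than $k$ replies arrive --- is exactly what the parameter choices in the statement and the preceding remarks set up. Your tail bookkeeping (pairing the $7k\,\polylog n$ overshoot with $\delta'=\log^{-c}(n)$ and the trivial bound $n$ with the $n^{-\Omega(1)}$ failure probability) is the right way to get the expected reply cost down to $\bigO(k)$; the only nitpick is that the fallback fires with probability up to the constant $\delta$ rather than $\delta'$, which is harmless since its cost is $\bigO(k+\log n)$ in any case.
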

\man{proof in appendix?}

\section{Multiple Time Step Computation:\\
	 A Fully Dynamic Distributed Data Structure}
\label{sec:DS}

In this section we consider computations of Top-$k$ or approx. $k$-Select for multiple time steps. 
We use a dynamic data structure to keep rough information such that a required computation can be executed more efficiently.
The main part of this section focuses on computing an element with rank close to $k$ utilizing our data structure.
The final results for answering the queries on the basis of this element are described in Section~\ref{sec:final}.

Our basic idea is to maintain a structure similar to the trees (to be more precise, only the left-most path) used to identify the (approximately) $k$'th smallest items in the previous chapters.
The data structure maintains the rough rank sketch which is defined as follows:
\begin{definition}[Rough Rank Sketch (RRS)]
	\label{p:logsel}
	A data structure for the approximate $k$-select problem fulfills the RRS property if
	a request for the data item of rank $k$
	will be answered with an item of rank in $[k,\log^c(n)\cdot k]$ with probability at least $1- \log^{-c} (n)$.
\end{definition}

We divide the ranks $1,\ldots,n$ into classes.
The goal is that a data item of each class (representative) is contained in our data structure.
The height of a class represents the expected maximum height found within this class, such that our representative will have a height value
within the noted bounds.

Let $\h \coloneqq \logphilog$.
The idea of classes is captured in the following definition:

\begin{definition}
	Let $\con$ be sufficiently large.
	A \emph{Class} $\pazocal{C}_\ell^t$ consists of all data items $d_j^t$ with $rank(d_j^t) \in [\log^{\ell 8\con} (n), \log^{(\ell+1) 8\con}(n))$.
	We denote by $h(C_\ell^t) = (\ell 8\con \h, (\ell+1) 8\con \h ]$ the height of the class $\pazocal{C}_\ell^t$.
\end{definition}
By abuse of notation we introduce $d_i^t \in C_\ell^t$ which shortens $rank(d_i^t) \in C_\ell^t$.

We divide each class into sub-classes as follows:

\begin{definition}
	Let $\con$ be defined as before. 
	We denote by a \emph{sub class} $C_{\ell, \tau}^t$, with $\tau \in \{0, \ldots, 3\}$, the set of data items $d_i^t$ with a rank $rank(d_i^t)$ between  $\log^{\ell 8\con + 2 \tau \con}(n)$ and $\log^{\ell 8\con + (2 \tau + 2) \con} (n)$. 
	The height of $C_{\ell, \tau}^t$ is $h(C_{\ell, \tau}^t) = ((\ell 8\con + (2 \tau + 1) \con) \h, (\ell 8\con + (2 \tau + 3) \con) \h]$.
\end{definition}

We omit the time step $t$ in our notation whenever it is clear from the context.

\begin{definition}
	The data items in a class $C_\ell$ are \emph{well-shaped}, if for each data item $d_i$ with	$rank(d_i) \in [\log^{\ell 8\con + 2 \tau \con}(n), \log^{\ell 8\con + (2 \tau + 2) \con}(n)]$ it holds 
	 $h_i \leq (\ell 8\con + (2 \tau + 3) \con) \h$.
\end{definition}

\begin{algorithm}[ht]

INITIALIZE() [Repeat until all classes are filled, i.e.\ $\forall \ell \ \exists \al \in S_{\ell,1}, \rl \in S_{\ell, 2}$] \squash
\begin{enumerate}[noitemsep]
	\item \textbf{Call} \CoFaSel\!$(\phi, h_{max} = \log_{1/\phi} n, k = 1)$ (\cref{alg:CoFaSel}), and keep a data structure $DS$ with all $(h_i, d_i)$ pairs, where for each $h_i$ the smallest response $d_i$ is kept.
	\item Assign data item $d_i$ with height $h_i$ to its sub class $S_{\ell, \tau'}$, if $h_i \in h(C_{\ell, \tau'})$ holds.
	\item Choose $\al \in S_{\ell, 1}$ and $\rl \in S_{\ell, 2}$ uniformly at random.
\end{enumerate}

UPDATE($i, d$) [Executed by node $i$] \squash
\begin{enumerate}[noitemsep]
	\item Update $d_i^t$ by $d_i^{t+1} = d$, delete $(d_i^t, h_i^t)$ from $DS$ (if it was in $DS$).
	\item \textbf{If} $d_i = \al$ or $d_i = \rl$ or ($h_i \in [h(\al), h(\rl)]$ and $d_i < \al$) \\
		\textbf{then} delete all $(h_j, d_j)$ pairs from $DS$, where $d_j \in S_\ell$ holds.
	\item Draw a new value from the geometric distribution with $p = (1 - \phi)$ and redefine $h_i \coloneqq \min\{h_i, h_{max}\}$.
\end{enumerate}

REFRESH() [Repeat until all classes are filled, i.e.\ $\forall \ell \ \exists \al \in S_{\ell,1}, \rl \in S_{\ell, 2}$] \squash
\begin{enumerate}[noitemsep]
	\item define $t \coloneqq t+1$
	
	\item Determine level $\ell$ such that all classes $C_{\ell'}$, $\ell' > \ell$ are filled, \\
	more formally $\forall \ell' > \ell, S_{\ell, 1} \neq \emptyset \wedge S_{\ell, 2} \neq \emptyset$.
	
	\item Determine maximal height $h$ of all nodes $i$ that observed an UPDATE since the last INITIALIZE or REFRESH operation.
	Let $\ell''$ be the level with $h \in h(C_{\ell''})$.\\
	Define $\ell \coloneqq max( \ell, \ell'')$.
	
	\item \textbf{Call} INITIALIZE() (only on sensor nodes $i$ with $d_i < r_\ell$)
\end{enumerate}
		
ROUGH-RANK(k) \hfill $\triangleright ~ k$ denotes a rank \squash
\begin{enumerate}[noitemsep]
	\item Determine $\ell$ such that $k \in C_{\ell-1}$ holds.
	\item \textbf{Output} representative $\rl  \in S_{\ell,2}$. 
\end{enumerate}

\caption{SeleMon($\phi$) \hfill [Select and Monitor]}
\label{alg:SeleMon}
\vspace*{-3mm}
\end{algorithm}

\subsection{Correctness of INITIALIZE}
\label{se:init}
We start by analyzing the outcome of the INITIALIZE operation.
In this, we show that a class is well-shaped with sufficiently large probability in \cref{le:wellshaped} and argue that the data structure yields a RR-Sketch in \cref{th:selemonINIT}, afterwards.

\begin{lemma}
	\label{le:wellshaped}
	Let $\ell\in\mathbb{N}$.
	After an execution of INITIALIZE, the class $C_\ell$ is well-shaped 
with probability at least $1- \log^{-c} (n)$, for some constant $c$.
\end{lemma}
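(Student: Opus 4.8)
The plan is to notice that ``well-shaped'' is a purely deterministic condition on the (random) height assignment $(h_i)_i$ together with the current ranks: it asks that for every $\tau\in\{0,1,2,3\}$ and every data item whose rank lies in $[\log^{\ell 8\con+2\tau\con}(n),\log^{\ell 8\con+(2\tau+2)\con}(n))$, the height satisfies $h_i\le(\ell 8\con+(2\tau+3)\con)\h$. Since the ranks are fixed throughout the (instantaneous) INITIALIZE call, and the heights drawn by \CoFaSel in \cref{alg:CoFaSel} are i.i.d.\ geometric with parameter $1-\phi$ (truncated at $h_{max}$, which only shrinks heights and hence only helps), I would fix $\ell$, bound the failure probability separately for each of the four sub-classes $C_{\ell,0},\dots,C_{\ell,3}$, and finish with a union bound over $\tau$.

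For a fixed $\tau$, the number of data items with rank in the stated interval is at most $\log^{\ell 8\con+(2\tau+2)\con}(n)$. For each such node the geometric tail gives $\Pr[h_i>(\ell 8\con+(2\tau+3)\con)\h]\le\phi^{-1}\phi^{(\ell 8\con+(2\tau+3)\con)\h}$, the factor $\phi^{-1}$ coming from rounding the exponent down to an integer. Recalling that $\h=\logphilog$, so that $\phi^{\h}=1/\log(n)$, this equals $\phi^{-1}(\log n)^{-(\ell 8\con+(2\tau+3)\con)}$. A union bound over the at most $\log^{\ell 8\con+(2\tau+2)\con}(n)$ relevant nodes therefore bounds the probability that $C_{\ell,\tau}$ fails to be well-shaped by $\phi^{-1}(\log n)^{(\ell 8\con+(2\tau+2)\con)-(\ell 8\con+(2\tau+3)\con)}=\phi^{-1}(\log n)^{-\con}$; the crucial point is exactly that the sub-class definitions leave a slack of $\con$ between the exponent $(2\tau+2)\con$ that controls the node count and the exponent $(2\tau+3)\con$ that controls the height bound, and that all the $\ell$- and $\tau$-dependent terms cancel. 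Summing over the four sub-classes gives a failure probability of at most $4\phi^{-1}(\log n)^{-\con}$, which is at most $\log^{-c}(n)$ with, say, $c\coloneqq\con-2$, for $n$ large enough and $\con$ the sufficiently large constant fixed in the definition of a Class.

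The one place that requires genuine care is that INITIALIZE repeats its body until every class is populated, so the height assignment one actually ends up with is conditioned on the event that all $S_{\ell',1},S_{\ell',2}$ are non-empty. I would dispose of this by arguing that this event already holds with high probability after a single iteration: conditioned on the fixed ranks, the number of nodes having simultaneously a rank in the window of $S_{\ell',\tau}$ and a height in $h(C_{\ell',\tau})$ has expectation $(\log n)^{\Theta(\con)}$, so a Chernoff bound makes each of the $\bigO(\log(n)/\log\log(n))$ classes non-empty except with probability $\exp(-(\log n)^{\Omega(1)})$, and a union bound over classes keeps this negligible (this is in any case part of what \cref{th:selemonINIT} needs). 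Conditioning on this high-probability event only multiplies the bound from the previous paragraph by $1+o(1)$, so the claimed $1-\log^{-c}(n)$ survives. This loop-termination/conditioning step, rather than the elementary union bound, is the only real obstacle.
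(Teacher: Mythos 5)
Your proof is correct and takes essentially the same route as the paper's: fix a sub-class, union-bound over its at most $\log^{\ell 8\con+(2\tau+2)\con}(n)$ items against the geometric tail $\phi^{(\ell 8\con+(2\tau+3)\con)\h}=\log^{-(\ell 8\con+(2\tau+3)\con)}(n)$ so that the exponents cancel to $\log^{-\con}(n)$, then union-bound over the four sub-classes. Your extra paragraph on the conditioning induced by the repeat-until-filled loop addresses a point the paper's proof silently omits, and does so correctly.
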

\begin{proof}
	Fix a sub class $C_{\ell, \tau}$ and consider the data items $d_i$ with $rank(d_i) \in C_{\ell, \tau} = [\log^{\ell 8\con + 2\tau \con}(n), \log^{\ell 8\con + (2\tau + 2) \con} (n)]$. 
	The sub class admits the well-shaped property, if each data item has a height of at most $h \coloneqq (\ell 8\con + (2 \tau + 3) \con)\h$. 
	To this end, we upper bound the probability that there is a data item with a height of at least $h$ by applying union bound as follows:		
	\begin{align*}
		\Pr[\exists d_i \in C_{\ell, \tau} : h_i > h] 
		& \leq \left( \log^{\ell 8\con + (2 \tau + 2) \con} (n) - \log^{\ell 8\con + (2 \tau) \con} (n) \right) \cdot \phi^{h}\\
		& \leq  \log^{\ell 8\con + (2 \tau + 2) \con} (n)  \cdot \log^{- (\ell 8\con + (2 \tau + 3) \con)} (n) \leq \log^{- \con} (n) 
	\end{align*}
	Since there are $4$ sub classes in class $C_\ell$, the probability that there exists a data item which prevents the class to be well shaped is upper bounded by $4 \log^{-\con}(n)$ applying union bound once again.
\end{proof}

\begin{lemma}
	\label{le:existence}
	Consider a sub class $C_{\ell, \tau'}$, with $\tau' \in \{1, 2\}$.
	There is a data item $d_{i} \in C_{\ell, \tau'}$ with $h_{i} > (\ell 8\con + (2 \tau' + 1)\con) \h$ with high probability.
\end{lemma}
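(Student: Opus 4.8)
The plan is to note that the threshold appearing in the statement, call it $h^\star \coloneqq (\ell 8\con + (2\tau'+1)\con)\,\h$, is exactly the lower endpoint of the height interval $h(C_{\ell,\tau'})$, and to show that among the roughly $\log^{\ell 8\con + (2\tau'+2)\con}(n)$ nodes whose rank lies in $C_{\ell,\tau'}$, with overwhelming probability at least one draws a geometric height above $h^\star$. The first step is to get rid of the capping: INITIALIZE redefines each height to $\min\{h_i,h_{max}\}$, but a node whose \emph{uncapped} geometric draw exceeds $h^\star$ still has capped height exceeding $h^\star$ provided $h^\star < h_{max}$. This holds whenever $C_{\ell,\tau'}$ is non-empty: then its largest rank $\log^{\ell 8\con + (2\tau'+2)\con}(n)$ is at most $n$, which forces $\ell 8\con + (2\tau'+2)\con \le \log_{1/\phi}(n)/\h = h_{max}/\h$ and hence $h^\star < h_{max}$. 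So it suffices to upper bound the probability that all uncapped heights in $C_{\ell,\tau'}$ are at most $h^\star$.

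Now I would carry out the direct computation. Set $M \coloneqq |C_{\ell,\tau'}| = \log^{\ell 8\con + (2\tau'+2)\con}(n) - \log^{\ell 8\con + 2\tau'\con}(n)$, so that $M \ge \tfrac12 \log^{\ell 8\con + (2\tau'+2)\con}(n)$ for $n$ large (as $\con\ge 1$ makes the subtracted term smaller by a $\log^{2\con}(n)$ factor). For a single node, $\Pr[h_i > h^\star] \ge \phi^{h^\star}$, and since $(1/\phi)^{\h} = (1/\phi)^{\logphilog} = \log(n)$ we have $\phi^{h^\star} = \log^{-(\ell 8\con + (2\tau'+1)\con)}(n)$. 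The heights are drawn independently, hence
\[
  \Pr\bigl[\forall d_i \in C_{\ell,\tau'} : h_i \le h^\star\bigr] = \bigl(1 - \phi^{h^\star}\bigr)^{M} \le \exp\!\bigl(-M\,\phi^{h^\star}\bigr) \le \exp\!\Bigl(-\tfrac12 \log^{\con}(n)\Bigr),
\]
where the last step uses $M\,\phi^{h^\star} \ge \tfrac12 \log^{\ell 8\con + (2\tau'+2)\con}(n)\cdot \log^{-(\ell 8\con + (2\tau'+1)\con)}(n) = \tfrac12 \log^{\con}(n)$. For $\con$ a sufficiently large constant this is at most $n^{-c}$ (in fact $\exp(-\log^{\con}(n))$ is super-polynomially small), so the complementary event — existence of $d_i \in C_{\ell,\tau'}$ with $h_i > h^\star$ — holds with high probability.

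The only genuinely delicate point is the interaction with the $h_{max}$-capping, i.e.\ ensuring $h^\star < h_{max}$ over the full range of levels $\ell$ for which the sub-class can be non-empty; this is the arithmetic check $\ell 8\con + (2\tau'+2)\con \le h_{max}/\h$ used in the first paragraph. Beyond that the argument is a single product bound (not even a union bound) together with $1+x \le e^x$, so no Chernoff machinery is needed, and all constants are absorbed into the choice of $\con$.
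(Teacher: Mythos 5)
Your proof is correct and follows essentially the same route as the paper's: the identical product bound $\left(1-\phi^{h^\star}\right)^{|C_{\ell,\tau'}|}$ combined with $1+x\le e^x$ and the observation that $|C_{\ell,\tau'}|\cdot\phi^{h^\star}\ge\tfrac12\log^{\con}(n)$, yielding a super-polynomially small failure probability. The only addition is your explicit check that the $h_{max}$-capping does not interfere, a detail the paper leaves implicit.
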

\begin{proof}
	Recall that for a fixed data item $d_i$ and sensor node $i$ the probability for $h_i > h$ is $\phi^h$. 
	Here we simply upper bound the probability that each data item in the sub class has a height of at most $h$ as follows:
	\begin{align*}
		\Pr[\forall d_i & \in C_{\ell, \tau'}: h_i \leq (\ell 8\con + (2\tau' + 1) \con) \h] 
		\leq \left( 1 - \phi^{(\ell 8\con + (2\tau' + 1) \con) \h}   \right)^{|C_{\ell, \tau'}|}\\
		& \leq \left( 1 - {\log^{-(\ell 8\con+ (2\tau' + 1)\con)}(n)}  \right) ^{\log^{\ell 8\con + (2\tau' + 2) \con} (n) - \log^{\ell 8\con + (2\tau')\con} (n)}\\
		& \leq \left( \frac{1}{e} \right)^{\frac{1}{2} \log^{\con}(n) }
		\leq n^{-\frac{1}{2} \log(e) \log^{\con - 1}(n)}
		\leq n^{-c},
	\end{align*}
	for some constant $c$.
\end{proof}

\begin{theorem}
	\label{th:selemonINIT}
	After execution of INITIALIZE for each rank $k$ exists a data item in the data structure with rank between $k$ and $k \cdot \log ^c (n)$ with probability at least $1-\log^{-c'} (n)$ for constants $c, c'$.
\end{theorem}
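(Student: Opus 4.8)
The plan is to combine the two preceding lemmas to show the data structure satisfies the RRS property. Recall from \cref{le:existence} that for each level $\ell$ and each of the sub classes $\tau' \in \{1,2\}$, with high probability there exists a data item in $C_{\ell,\tau'}$ whose height exceeds the lower threshold $(\ell 8\con + (2\tau'+1)\con)\h$ of that sub class; this guarantees that \CoFaSel (with $k=1$, run until $h_{max}$) actually sees a response at every relevant height, so the sets $S_{\ell,1}$ and $S_{\ell,2}$ are non-empty and the representatives $\al, \rl$ are well defined. From \cref{le:wellshaped} we have that with probability at least $1-\log^{-c}(n)$ every class $C_\ell$ is well-shaped, i.e.\ no data item of rank in $C_{\ell,\tau}$ has height exceeding $(\ell 8\con + (2\tau+3)\con)\h$.

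Next I would fix an arbitrary rank $k$, locate the level $\ell$ with $k \in C_{\ell-1}$ as in the ROUGH-RANK operation, and argue about the rank of the returned representative $\rl \in S_{\ell,2}$. The element $\rl$ has, by construction, a height value lying in $h(C_{\ell,2}) = ((\ell 8\con + 5\con)\h,\, (\ell 8\con + 7\con)\h]$. On the well-shaped event, any data item whose rank lies below the bottom of sub class $C_{\ell,1}$ — i.e.\ of rank less than $\log^{\ell 8\con}(n)$, hence in class $C_{\ell-1}$ or lower — has height at most $(\ell 8\con + \con)\h$ (using well-shapedness of $C_{\ell-1}$ applied to its top sub class), which is strictly below the height of $\rl$; therefore $rank(\rl) \geq \log^{\ell 8\con}(n)$. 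Combined with $k < \log^{\ell 8\con}(n)$ (since $k \in C_{\ell-1}$), this gives $rank(\rl) \geq k$. For the upper bound: again on the well-shaped event, since $\rl$ has height at most $(\ell 8\con + 7\con)\h$, it cannot lie in any sub class whose bottom threshold forces larger heights; more directly, $\rl$ was placed in $S_{\ell,2}$ because its height is in $h(C_{\ell,2})$, and the smallest-response bookkeeping in INITIALIZE together with \cref{le:existence} ensures $\rl$ is not far past the start of $C_{\ell,2}$, so $rank(\rl) \le \log^{(\ell+1)8\con}(n) \le k \cdot \log^{c}(n)$ for an appropriate constant $c$, using $k \ge \log^{(\ell-1)8\con}(n)$.

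Finally I would collect the failure probabilities: the well-shaped events across the $O(\logphi)$ relevant levels, plus the existence events from \cref{le:existence}, each fail with probability at most $\log^{-c}(n)$ or $n^{-c}$; a union bound over the (polylogarithmically many) classes keeps the total failure probability at $\log^{-c'}(n)$ for a suitable $c'$, and rescaling $\con$ absorbs the union-bound loss.

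I expect the main obstacle to be the upper bound on $rank(\rl)$: well-shapedness only constrains heights from \emph{above} for low-rank items, so pinning down how large $rank(\rl)$ can be requires arguing that the \emph{smallest} response at the heights in $h(C_{\ell,2})$ — which is what INITIALIZE records — cannot come from a node of rank much beyond the start of $C_{\ell,2}$. This needs the density/existence argument of \cref{le:existence} applied to the lower sub classes to show a qualifying node of small rank is present, so that the recorded minimum is genuinely small; threading the constants $\con$ through both the height-class definitions and the $\log^c(n)$ slack in the RRS definition is the delicate bookkeeping step.
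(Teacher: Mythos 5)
Your proposal is correct and follows essentially the same route as the paper: fix a class, combine \cref{le:wellshaped} (well-shapedness bounds the heights of low-rank items from above) with \cref{le:existence} (a sufficiently high item exists inside each relevant sub class) to conclude that $S_{\ell,1},S_{\ell,2}$ contain representatives of the right rank, then union bound over the $O(\log n)$ classes. You actually spell out the lower and upper rank bounds on $\rl$ in more detail than the paper does, and the ``delicate'' upper-bound step you flag is exactly the point the paper's own proof also treats only implicitly (it is made rigorous later in \cref{le:position}).
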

\begin{proof}
	
	First consider a fixed class $C_\ell$ for a fixed $\ell \in \mathbb{N}$.
	Based on \cref{le:wellshaped} we can show that the distribution of the random heights is well-shaped with probability at least $1-\log^{-c^{*}}(n)$ for a constant $c^{*}$.
	Now, with high probability there is a data item with such a height for sufficiently large $\con$ and $n$ due to \cref{le:existence}.
	We may in fact choose $c^{*}$ such that the probabilities for both to occur is at least $1-\log^{-c^{*}}(n)$.
	These observations together show that there is a data item $d_\tau'$ identified and stored in $DS$ and thus, for each request $k \in C_{\ell-1}$ the algorithm has identified a representative in $C_{\ell}$ as a response with a rank only by a polylogarithmic factor larger than $k$.

	Furthermore, note that there are at most $\log(n)$ number of classes.
	The argument stated above applied to each class leads to the desired result, where (applying union bound) also shows the desired success probability of $1 - \log^{-(c^{*}-1)}(n)$.
\end{proof}

Now we have shown that the Rough Rank Sketch is calculated by executing INITIALIZE with certain probability.
To analyze the number of messages in expectation and the number communication rounds we refer to \cref{le:cofaselApproxRounds} and \cref{th:CoFaSel}, respectively.
Since INITIALIZE is strongly based on the \CoFaSel protocol, similar arguments hold for this section, again.
However, note that the repetitions of the algorithm to obtain representatives $\al$ and $\rl$ for each level $\ell$ and thus for each class $C_\ell$ is not straight forward. 
A complete recomputation from scratch until all representatives are obtained introduces a factor of $\log^*(n)$ to the communication costs and rounds.
For this simply observe that there are at most $\log(n)$ different classes $C_\ell$, for which 
However, here only for those levels the \CoFaSel protocol is called, where $\al$ and $\rl$ is not known leading to additional constant factor overhead (in expectation).

\subsection{Correctness of REFRESH}
\label{se:refresh}

In the previous subsection we have shown that the algorithm INITIALIZE computes a Rough Rank Sketch for a fixed time step.
In this section we show that the REFRESH method preserves and / or rebuild parts of the data structure such that a Rough Rank Sketch is achieved after $m$ UPDATES took place. 
We analyze two different scenarios and analyze the probability of the scenario to occur: 
The representative of the class itself is UPDATED and thus, the class gets deleted
and the case that the representative does not get an UPDATE, but the rank does not reflect the situation correctly at the next time step $t+1$. 

\subsubsection{UPDATE to a representative}
\label{sec:attack}
We analyze the probability that the alarm $\al$ or the representative $\rl$ of a class $C_\ell$ is updated and thus, the class gets deleted from the data structure. 
This is okay, if there are sufficiently many UPDATEs, i.e.\ $m$ is sufficiently large.
However, if $m$ is small compared to the number of data items sub classes $C_{\ell,1}$ and $C_{\ell, 2}$ consists of, the probability to choose exactly $\al$ or $\rl$ for an UPDATE and thus delete from the data structure is small, as \cref{le:update} states.
The proof can be found in the appendix.

\begin{lemma}
	\label{le:update}
	Let $m$ be the number of UPDATE operations since INITIALIZE or REFRESH is called. 
	Let $C_\ell$ be a class with $m < \log^{\ell 8\con} (n)$. 
	The representative of $C_i$ did not get an update with probability at least $1-\log^{-c}(n)$, for a constant $c$.
\end{lemma}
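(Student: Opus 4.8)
The plan is to bound the probability that one of the two fixed items $\al$ or $\rl$ is touched by an UPDATE, given that only $m$ UPDATEs occur and $m < \log^{\ell 8\con}(n)$. First I would recall which nodes' UPDATEs actually cause the class $C_\ell$ to be deleted: by the UPDATE procedure in \cref{alg:SeleMon}, the class is deleted when node $i$ with $d_i = \al$ or $d_i = \rl$ (or an item in the height window that drops below $\al$) receives a new data item. For this lemma we focus on the two representatives themselves. The key structural observation is that $\al$ was chosen \emph{uniformly at random} among the items of the sub class $S_{\ell,1}$, and $\rl$ uniformly at random among $S_{\ell,2}$; moreover these sub classes have size at least $\log^{\ell 8\con + 2\con}(n) - \log^{\ell 8\con}(n) \geq \tfrac12 \log^{\ell 8\con + \con}(n)$ (using $\con$ large) by the definition of sub class together with \cref{le:wellshaped}/\cref{le:existence}, which guarantee the sub classes are nonempty and of the stated rank range with high probability.

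Next I would argue as follows. Condition on the (at most) $m$ nodes that receive an UPDATE; call this set $M$ with $|M| \le m$. The adversary specifying the stream does not know the internal random choice of which element of $S_{\ell,1}$ became $\al$ (respectively of $S_{\ell,2}$ became $\rl$), since that choice is made by the server's private randomness during INITIALIZE/REFRESH. Hence for each node $j \in M$ whose item lies in $S_{\ell,1}$, the probability that $d_j$ was exactly the sampled representative $\al$ is $1/|S_{\ell,1}|$, and similarly for $\rl$ and $S_{\ell,2}$. A union bound over the at most $m$ updated nodes gives
\[
\Pr[\al \text{ or } \rl \text{ updated}] \;\le\; \frac{m}{|S_{\ell,1}|} + \frac{m}{|S_{\ell,2}|} \;\le\; \frac{2m}{\tfrac12 \log^{\ell 8\con + \con}(n)} \;\le\; \frac{4m}{\log^{\ell 8\con} (n) \cdot \log^{\con}(n)}.
\]
Using the hypothesis $m < \log^{\ell 8\con}(n)$, the right-hand side is at most $4 \log^{-\con}(n)$, which is at most $\log^{-c}(n)$ for $c = \con - 1$ and $n$ large. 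Finally I would take a union bound with the small-probability events from \cref{le:wellshaped} and \cref{le:existence} that the sub classes fail to have their claimed sizes, each of which costs only an additional $\log^{-\Omega(\con)}(n)$ term, so the overall failure probability is still $\log^{-c}(n)$ after adjusting $c$.

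The main obstacle I anticipate is making the independence argument fully rigorous: one must be careful that the identity of the representative is independent of \emph{which} nodes get updated, but the set $M$ and the representative are both functions of the same data items, so the cleanest framing is to fix the data configuration and the stream (hence $M$) first, and only then reveal the server's uniform random selection inside $S_{\ell,\tau'}$ — this is legitimate because the selection in INITIALIZE uses fresh randomness not visible to the stream. A secondary subtlety is that an UPDATE can also delete the class through the third condition ($h_i \in [h(\al), h(\rl)]$ and $d_i < \al$), i.e.\ when a \emph{new} small item with the right height appears; but the lemma as stated only claims the representative ``did not get an update,'' so it suffices to treat the first two conditions, and I would note explicitly that the third condition is handled separately (it concerns the rank becoming stale, analyzed in the complementary scenario of Section~\ref{se:refresh}).
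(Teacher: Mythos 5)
Your proposal is correct and follows essentially the same route as the paper's proof: both lower-bound the size of the sub classes from which $\al$ and $\rl$ are drawn by roughly $\tfrac12\log^{\ell 8\con + 2\con}(n)$, use that the representative is a uniform random choice made with server randomness hidden from the update stream, and conclude via a union bound over the at most $m$ updated nodes that the hitting probability is $\bigO\!\left(m/\log^{\ell 8\con+\con}(n)\right) \leq \log^{-c}(n)$. Your explicit handling of the independence between the server's random selection and the set of updated nodes, and your remark that the third deletion condition of UPDATE lies outside the scope of this lemma, are if anything more careful than the paper's terser version of the same argument.
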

\begin{proof}
	Recall that the algorithm deletes the element $d_i$ from $DS$ if UPDATE($i, d$) is called for an arbitrary $d$.
	The probability that the representative is updated is maximized if the UPDATE operations occur on $m$ different nodes.

	The size of $C_{\ell, \tau'}$ is at least $\log^{\ell 8\con + (2 \tau' + 2) \con} (n) - \log^{\ell 8\con + 2 \tau' \con}(n)$ which is larger than $\frac{1}{2} \log^{\ell 8\con + (2 \tau' + 2) \con}(n)$ and $m \leq M = \log^{\ell 8\con + 1}(n)$. 
	The probability for \cref{alg:SeleMon} to randomly choose one of these nodes as a representative is upper bounded by the term $2 \log^{ - 2 \con}(n) \leq \log^{-c}(n)$.
\end{proof}

\subsubsection{Push representative out of class} 
\label{sec:push}
We want to estimate the probability that some representative $d_i$ from our data structure was in $C_\ell^t$ but not in $C_\ell^{t+1}$.

We start our analysis with a result on the rank of a data item, given the randomly drawn height in \cref{le:position}. 
Afterwards, we show in \cref{le:updateClassSmall} that the representative of $C_\ell^t$ is still in $C_\ell^{t+1}$, if $m$ is not too large.
However, if the number of updates is large, we analyze that the representative is deleted from $DS$ in \cref{le:updateClassLarge} with sufficiently large probability.
We conclude that the desired properties of the data structure are restored after a REFRESH operation (\cref{th:refresh}).

\begin{lemma}
	\label{le:position}
	Fix a time $t$. Consider a data item $d_i \in DS$ and let $h_i \in h(C_{\ell, \tau'})$ be the height of data item $d_i$.
	It holds $d_i \in C_{\ell, \tau'}$ with high probability.
\end{lemma}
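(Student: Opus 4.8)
The plan is to translate the claim into a statement about how many nodes observe data items smaller than $d_i$, and then bound the two bad events --- the rank being too small or too large relative to the interval defining $C_{\ell,\tau'}$ --- by Chernoff bounds on a binomial variable, exactly in the spirit of \cref{le:cofaselConst} and \cref{le:cofaselBadError}. Concretely, let $X$ denote $rank(d_i)$, i.e.\ the number of nodes $j$ with $d_j \le d_i$. Since the heights $h_j$ are drawn i.i.d.\ from the geometric distribution with parameter $p = 1-\phi$ (truncated at $h_{max}$), the event that a node with a given rank attains height at least $h$ is a Bernoulli trial with success probability $\phi^{h-1}$, independent across nodes. Conditioning on $h_i \in h(C_{\ell,\tau'}) = ((\ell 8\con + (2\tau'+1)\con)\h,\ (\ell 8\con + (2\tau'+3)\con)\h]$ and recalling that the data structure keeps, for each height value, only the smallest response, $d_i$ being stored with this height means that $d_i$ is (essentially) the smallest data item whose node drew a height in this band; so $rank(d_i)$ is governed by where the first such ``high'' node sits in the rank order.

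First I would make the upper-tail argument: if $rank(d_i)$ were much larger than $\log^{\ell 8\con + (2\tau'+2)\con}(n)$ (the right endpoint of $C_{\ell,\tau'}$), then \emph{none} of the first, say, $7 \cdot \log^{\ell 8\con + (2\tau'+2)\con}(n)$ lowest-ranked nodes drew a height exceeding the lower boundary $(\ell 8\con + (2\tau'+1)\con)\h$ of the height band. Let $Y \sim \mathrm{Bin}\big(\log^{\ell 8\con + (2\tau'+2)\con}(n),\ \phi^{(\ell 8\con + (2\tau'+1)\con)\h - 1}\big)$; using $\phi^{\h} = \log^{-1}(n)$ (which is how $\h = \logphilog$ is chosen) one computes $\E[Y] = \Theta(\log^{\con}(n))$, and $\Pr[rank(d_i) \text{ too large}] \le \Pr[Y = 0] \le \exp(-\Omega(\log^{\con}(n))) \le n^{-c}$ for $\con$ a sufficiently large constant, since $\log^{\con}(n) = \Omega(\log(n))$ once $\con \ge 1$. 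This is the same computation as in \cref{le:existence}.

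Symmetrically, for the lower tail: if $rank(d_i)$ were much smaller than $\log^{\ell 8\con + 2\tau'\con}(n)$ (the left endpoint), then among the few lowest-ranked nodes one of them already drew a height above $(\ell 8\con + (2\tau'+3)\con)\h$ (the upper boundary of the band), which is the well-shaped-violation event already controlled in \cref{le:wellshaped}: a union bound over the at most $\log^{\ell 8\con + 2\tau'\con}(n)$ relevant ranks against the per-node probability $\phi^{(\ell 8\con + (2\tau'+3)\con)\h} = \log^{-(\ell 8\con + (2\tau'+3)\con)}(n)$ gives a bound of $\log^{-3\con}(n) \le n^{-c}$. Combining the two tail bounds via a union bound yields $d_i \in C_{\ell,\tau'}$ with high probability, and since $C_{\ell,\tau'} \subseteq C_\ell$ by construction this also pins the data item to the right class. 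The main obstacle is bookkeeping: being careful that ``$d_i$ stored in $DS$ with height $h_i$'' is the correct conditioning event (the smallest response per height), so that the Bernoulli trials over the lowest-ranked nodes are genuinely independent and the tail events are precisely the ``no high node yet'' / ``a too-high node already'' events above; once that is set up, the Chernoff estimates are routine and mirror the earlier lemmas almost verbatim.
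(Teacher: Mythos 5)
Your overall strategy is the paper's: split into the two tail events ``rank below the left endpoint'' and ``rank above the right endpoint'' of $C_{\ell,\tau'}$, view the heights of the low-ranked nodes as independent Bernoulli trials, and apply concentration. Your upper-tail argument is essentially the paper's event $\pazocal{E}_2$: the expected number of nodes of rank at most $\log^{\ell 8\con+(2\tau'+2)\con}(n)$ whose height exceeds the lower boundary of the band is $\log^{\con}(n)$, so the probability that none does is $\exp(-\Omega(\log^{\con}(n)))\le n^{-c}$. That part is fine and matches the paper.

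The lower tail is where your argument breaks. If $rank(d_i)$ were below $\log^{\ell 8\con+2\tau'\con}(n)$ while $h_i\in h(C_{\ell,\tau'})$, the event that must have occurred is that some node of rank below the left endpoint drew a height exceeding the \emph{lower} boundary $(\ell 8\con+(2\tau'+1)\con)\h$ of the band --- because $h_i$ itself only needs to lie in the band, i.e.\ to exceed that lower boundary. You instead test against the \emph{upper} boundary $(\ell 8\con+(2\tau'+3)\con)\h$: a low-ranked node with height, say, $(\ell 8\con+(2\tau'+2)\con)\h$ falsifies your implication while still producing the bad event. (The threshold you actually want is the well-shaped bound of the \emph{preceding} sub-class $C_{\ell,\tau'-1}$, which coincides with the lower boundary of $h(C_{\ell,\tau'})$.) With the correct threshold the union bound gives $\log^{\ell 8\con+2\tau'\con}(n)\cdot\phi^{(\ell 8\con+(2\tau'+1)\con)\h}=\log^{-\con}(n)$, not $\log^{-3\con}(n)$. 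Moreover, your closing step $\log^{-3\con}(n)\le n^{-c}$ is false --- a polylogarithmically small quantity is not polynomially small. To be fair, the paper's own treatment of this tail ($\pazocal{E}_1$) claims $n^{-c}$ by applying the Chernoff form $\exp(-\delta^2\mu/3)$ with $\delta=\log^{\con}(n)-1\gg 1$, outside that bound's range of validity; the honest estimate there is likewise only $\Pr[X\ge 1]\le\E[X]=\log^{-\con}(n)$ by Markov. This weaker, polylogarithmic failure probability is still enough for how the lemma is used (the RRS guarantee is itself only $1-\log^{-c'}(n)$), but you need to state the correct event, the correct threshold, and the correct order of the failure probability.
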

\begin{proof}
	\man{Technik aus dem 'Mergeable Summaries' paper zitieren}
	We can apply a Chernoff argument to bound the probability that the rank of a given data item is of a specific range as follows:
	Consider data item $d_i$ and fix the height $h_i$ node $i$ has drawn randomly.
	Since $d_i \in DS$ holds, $d_i$ is the smallest data item among all data items with height $h_i$.
	Intuitively speaking, the rank of the data item with height $h_i$ is simply the number of repetitions of the random experiment until there is a 'success', meaning the height $h_i$ is drawn. 
	
	We bound the probability for the events $rank(d_i) < \log^{\ell 8\con + 2 \tau' \con}(n)$ denoted by $\pazocal{E}_1$ and $rank(d_i) > \log^{\ell 8\con + (2 \tau' + 2) \con}(n)$ denoted by $\pazocal{E}_2$.
	\begin{align*}
		\Pr[\pazocal{E}_1] 
		& = \Pr[\exists i: rank(d_i) < \log^{\ell 8\con + 2 \tau' \con} (n) \wedge h_i \in h(I_\ell) ]\\
		& \leq \Pr[\exists i: rank(d_i) < \log^{\ell 8\con + 2 \tau' \con } (n) \wedge h_i > (\ell 8\con + (2\tau' + 1) \con) \h ]
	\end{align*}
	To apply a Chernoff bound consider the expected number $X$ of 'successful' coinflips, where $h_i > (\ell 8\con + (2 \tau' + 1) \con)\h$ is a successful coin flip:
	$$\gamma_X \coloneqq \E[X] 
	= \log^{\ell 8\con + 2 \tau' \con} (n) \cdot \phi^{(\ell 8\con + (2 \tau' + 1) \con) \h} 
	= \log^{-\con} (n)$$
	Now consider the probability that there is a node with a coin success and a small rank to upper bound $\Pr[\pazocal{E}_1]$ as follows:
	\begin{align*}
		\Pr[X > (1 + (\log^{\con} (n) - 1) ) \gamma_X]
		& \leq \exp(- \frac{1}{12} \cdot (\log^{\con}(n))^2 \gamma_X)
		\leq n^{-\frac{1}{12} \log^{\con-2} (n)} 
	\end{align*}
	This can be upper bounded by $n^{-c}$ for $\con \geq 3$ and sufficiently large $n$.
	
	The argument for $\Pr[\pazocal{E}_2]$ follows similar ideas:
	\begin{align*}
		\Pr[\pazocal{E}_2] 
		& = \Pr[\forall i: rank(d_i) < \log^{\ell 8\con + (2 \tau' + 2) \con} (n) \Rightarrow h_i < (\ell 8\con + (2 \tau' + 1) \con) \h ]	
	\end{align*}
	Now consider the expected number $Y$ of 'successful' coin flips, where $h_i \in h(I_\ell) $ is a successful coin flip:
	\begin{align*}
		\gamma_Y = \E[Y] 
		& \leq \log^{\ell 8\con + (2 \tau' + 2) \con} (n) \cdot \phi^{(\ell 8\con + (2 \tau' + 1) \con) \h}  
		= \log^{\con} (n) 	\end{align*}
	Now consider the probability that there is no node with a coin success:
	$$\Pr[\pazocal{E}_2] \leq \Pr[Y < (1 - \frac{1}{2} ) \gamma_Y]
	\leq \exp(- \frac{1}{12} \cdot \gamma_Y)
	\leq n^{-\frac{1}{12} \log^{\con - 2} (n)} \leq n^{-c},
	$$
	with $\con \geq 3$ and sufficiently large $n$.
\end{proof}

\begin{lemma}
	\label{le:updateClassSmall} 
 	Fix an $\ell$ with $m < \log^{\ell 8\con} (n)$. 
 	Let $t_0$ be the time step INITIALIZE was called and $t - t_0 \leq \log (n)$ hold.
 	If at every time $t' \in [t_0, t]$ it holds $m_{t'} < \log^{\ell 8\con}(n)$, then the representative of $C_\ell^{t_0}$ is a valid representative of $C_\ell^t$ (w.h.p.).
\end{lemma}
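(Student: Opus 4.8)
The plan is to fix the level $\ell$ with $m < \log^{\ell 8\con}(n)$ and a representative $d_i$ of (a subclass of) $C_\ell^{t_0}$, and to track what can happen to it over the interval $[t_0,t]$. Two things could spoil the claim: (i) node $i$ could receive an UPDATE, which deletes $d_i$ from $DS$; or (ii) $d_i$ survives in $DS$, but the ranks of the other nodes shift enough between time steps that $rank(d_i^{t'})$ leaves the interval defining $C_\ell$. I would handle (i) first: at each intermediate time $t'$ the number of UPDATEs since $t_0$ is $m_{t'} < \log^{\ell 8\con}(n)$ by hypothesis, and since the representative was chosen uniformly at random from a subclass of size at least $\tfrac12 \log^{\ell 8\con+(2\tau'+2)\con}(n)$, the probability that any of these $m_{t'}$ UPDATEs hits exactly $d_i$ is at most $2\log^{-2\con}(n)$ per step; this is exactly the argument of \cref{le:update}, and a union bound over the at most $\log(n)$ time steps in $[t_0,t]$ keeps the total failure probability at most $\log^{-c}(n)$ for a suitable constant $c$.

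For (ii), the key observation is that the rank of $d_i$ is essentially pinned down by its drawn height $h_i$: by \cref{le:position}, since $d_i \in DS$ means $d_i$ is the smallest data item of height $h_i$, we have $rank(d_i) \in C_{\ell,\tau'}$ (for the subclass whose height window contains $h_i$) with high probability, at \emph{any} fixed time step. The height $h_i$ itself does not change unless node $i$ is updated — which case (i) has already excluded — so the same $h_i$ controls the rank of $d_i^{t'}$ at every $t' \in [t_0,t]$. Applying \cref{le:position} at each such time step and taking a union bound over the at most $\log(n)$ of them, $rank(d_i^{t'})$ stays inside $C_{\ell,\tau'} \subseteq C_\ell$ throughout, with failure probability $\log(n)\cdot n^{-c}$, which is negligible compared to the $\log^{-c}(n)$ term. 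Combining (i) and (ii) by one final union bound gives that $d_i$ remains a valid representative of $C_\ell^t$ with probability at least $1-\log^{-c}(n)$.

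I would expect the main subtlety to be the \emph{joint} control over all intermediate time steps rather than a single one. \cref{le:position} is stated "for a fixed time $t$", and \cref{le:update} bounds the effect of a single batch of UPDATEs; the work here is to argue that nothing bad accumulates — in particular that the height $h_i$ is genuinely frozen across $[t_0,t]$ precisely because the only event that would resample it (an UPDATE to node $i$) is the event being excluded in part (i), so parts (i) and (ii) are not independent but can be chained. The hypothesis $t - t_0 \le \log(n)$ is what makes the union bound over time steps affordable, and the hypothesis $m_{t'} < \log^{\ell 8\con}(n)$ at \emph{every} $t'$ (not just at $t$) is what lets \cref{le:update}'s bound be invoked uniformly; I would make sure to flag both uses explicitly.
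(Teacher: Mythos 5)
Your step (ii) has a genuine gap. You propose to re-invoke \cref{le:position} at every intermediate time step $t'\in[t_0,t]$ and union-bound over them, arguing that ``the same $h_i$ controls the rank of $d_i^{t'}$ at every $t'$.'' But \cref{le:position} is a statement about the randomness of the initial height assignment relative to the configuration present when $d_i$ was inserted into $DS$: its proof is a Chernoff bound over the heights of the (fixed) set of nodes whose ranks lie below/inside the sub-class at that moment. At a later time $t'$ the data items of up to $m_{t'}$ nodes have been adversarially replaced (and their heights resampled), so the hypothesis that $d_i$ is the smallest item of its height may fail, and more importantly the rank of $d_i^{t'}$ is no longer a function of the random heights at all --- it is $rank(d_i^{t_0})$ shifted by whatever the adversary chose to do. There is no fresh randomness at time $t'$ that would make ``$rank(d_i^{t'})\in C_{\ell,\tau'}$'' a high-probability event on its own, so the repeated application of \cref{le:position} does not go through.

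The paper closes this gap with a much simpler, purely deterministic drift argument, and this is the idea your proposal is missing: apply \cref{le:position} exactly once, at $t_0$, to get $rank(r_\ell^{t_0})\ge\log^{\ell 8\con+2\con}(n)$ (indeed $r_\ell\in C_{\ell,2}$); then observe that each UPDATE changes the rank of $r_\ell$ by at most one, so over the at most $\log(n)$ steps with fewer than $\log^{\ell 8\con}(n)$ updates each, the total rank shift is at most $M=\log^{\ell 8\con+1}(n)$. Since $\log^{\ell 8\con+2\con}(n)-M\ge\log^{\ell 8\con}(n)$ (and symmetrically for upward drift out of the class), the representative cannot leave $C_\ell$ regardless of which nodes the adversary updates --- no per-time-step probabilistic event is needed. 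Your part (i) (the representative itself being updated) is fine and mirrors \cref{le:update}, but note the paper treats that case separately (in \cref{sec:attack} and the case distinction of \cref{th:refresh}); this lemma is only responsible for the rank-drift scenario.
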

\begin{proof}
	Fix the sensor node $i$ which observed $d_i = r_\ell$. 
	Recall that $r_\ell \in C_{\ell, 2}^{t_0}$ holds.
	Now define the following: 
	Denote by $m_1$ the number of nodes $j$ that observed a data item $d_{j}^{t_0} < d_i$ at time $t_0$ and $d_j^t > d_j$ at time $t$.
	Additionally, let $M \coloneqq \log (n) \cdot \log^{\ell 8\con}(n) = \log^{\ell 8\con + 1} (n)$ be the upper bound on the total number of UPDATE operations since the last INITIALIZE operation. 
	
	We first consider the case that $m_1$ is maximal, i.e. $m_1 = M$.
	Recall that $rank(d_i^{t_0}) = rank(r_\ell) \in C_{\ell, 2}$ holds with high probability due to \cref{le:position}, i.e. $rank(d_i^{t_0}) \geq \log^{\ell 8\con + 2 \con} (n)$ holds.
	This leads to the simple observation that $rank(d_i^t) \geq \log^{\ell 8\con + 2 \con}(n) - M \geq \log^{\ell 8\con}(n)$ holds with high probability.
	
	The argument for a maximal $m_2$ is analog, the claim follows.
\end{proof}

\begin{lemma}
	\label{le:updateClassLarge}
	Fix an $\ell$ with $m \geq \log^{\ell 8\con} (n)$. 
	Let $t_0$ be the time step INITIALIZE was called and $t - t_0 \leq \log (n)$ hold.
	If 
	$\rl \in C_{\ell, 2}^t$ is no longer in $C_\ell^{t+1}$ the protocol deletes $\rl$ with high probability.
\end{lemma}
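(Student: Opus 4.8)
The plan is to show that $\rl$ can leave $C_\ell$ only if the adversary performs very many updates in this time step, and that so many updates force, with overwhelming probability, the next REFRESH to rebuild level $\ell$ (or a level above it) and thereby discard the stale $\rl$. Before that I would dispose of the easy cases: if the node $i$ with $d_i=\rl$ (or the node observing $\al$) performs an UPDATE before time $t+1$, then Step~1 of that UPDATE removes its pair from $DS$ and we are done deterministically; likewise if the class-$\ell$ deletion in Step~2 of some UPDATE is ever triggered. So from now on assume that neither representative's node is touched, that $\rl$ is still stored in $DS$ at time $t$, and (by hypothesis) that $\log^{\ell 8\con+4\con}(n)\le rank(\rl^t)<\log^{\ell 8\con+6\con}(n)$, while $\rl$'s value stays fixed throughout the step.

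Next I would lower-bound the number of updated nodes. Since $\rl\notin C_\ell^{t+1}$, either $rank(\rl^{t+1})<\log^{\ell 8\con}(n)$ or $rank(\rl^{t+1})\ge\log^{\ell 8\con+8\con}(n)$, so in both cases $\bigl|rank(\rl^{t+1})-rank(\rl^t)\bigr|\ge\frac{1}{2}\log^{\ell 8\con+4\con}(n)$. As $\rl$'s value does not change, each unit of this rank change is witnessed by a distinct node that switches sides of $\rl$ between the two endpoints, and every such node is updated at least once; hence at least $U:=\frac{1}{2}\log^{\ell 8\con+4\con}(n)$ distinct nodes perform an UPDATE in this time step (so the statement is non‑vacuous only when $m\ge U$).

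Now I would use the height redraw. In Step~3 of UPDATE every updated node resamples $h_i$ i.i.d.\ from the geometric distribution, so the post‑update heights of the $U$ updated nodes are independent, with $\Pr[h_i>\ell 8\con\h]=\phi^{\ell 8\con\h}=\log^{-\ell 8\con}(n)$. The expected number of updated nodes whose resampled height exceeds $\ell 8\con\h$ is therefore at least $U\cdot\log^{-\ell 8\con}(n)=\frac{1}{2}\log^{4\con}(n)=\omega(\log n)$, and a Chernoff bound makes it at least one with probability $1-n^{-c}$ for any constant $c$ (using that $\con$ is sufficiently large). For such a node $h_i\in h(C_{\ell''})$ with $\ell''\ge\ell$, so the REFRESH that processes these updates sets its working level to at least $\ell''\ge\ell$ in Step~3 and re‑runs INITIALIZE through class $\ell$ in Step~4; the entry stored for $S_{\ell,2}$ is recomputed, so $\rl$ is removed from $DS$. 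A union bound over the constantly many bad events keeps the overall failure probability $n^{-\Omega(1)}$, in particular below $\log^{-c}(n)$.

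The step I expect to be most delicate is arguing that the REFRESH‑triggered rebuild genuinely discards $\rl$ rather than re‑admitting the same $(h_\rl,\rl)$ pair. Conditioned on the height event the working level is $\ge\ell$, so INITIALIZE rebuilds $DS$ for all levels $\le\ell$, and by the correctness of INITIALIZE (\cref{th:selemonINIT}) the new level‑$\ell$ representative has rank in $C_{\ell,2}^{t+1}$ with high probability; since $rank(\rl^{t+1})\notin C_\ell^{t+1}$, that representative is not $\rl$. The one genuine corner case is the top‑most class, where $\ell 8\con\h$ may be comparable to $h_{max}$ so the resampled height is truncated at $h_{max}$; there, however, $m\ge\log^{\ell 8\con}(n)$ is already $n^{\Omega(1)}$, so one argues directly or reduces to the regime handled in \cref{le:updateClassSmall}.
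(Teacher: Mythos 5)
Your counting step is sound and is in fact a cleaner, unified version of what the paper does in two separate cases: if neither representative's node is touched and $\rl$ nevertheless leaves $C_\ell$, then at least $\tfrac12\log^{\ell 8\con+4\con}(n)$ distinct nodes must cross the fixed value $\rl$, each resampling its height, and a Chernoff bound gives an updated node of height exceeding $\ell 8\con\h$ with probability $1-n^{-\Omega(1)}$. The gap is in what you conclude from this. The lemma claims that \emph{the protocol deletes} $\rl$, and the only deletion mechanism in \cref{alg:SeleMon} is Step~2 of UPDATE, whose trigger is ``$d_i=\al$ or $d_i=\rl$ or ($h_i\in[h(\al),h(\rl)]$ and $d_i<\al$)''. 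Your argument never shows this condition fires: the node you exhibit has height above $\ell 8\con\h$, but the alarm window starts at $h(\al)>(\ell 8\con+3\con)\h$, and its new value is only guaranteed to lie below $\rl$ (in the rank-increase case) or above $\rl$ (in the rank-decrease case), not below $\al$. So you have targeted the wrong threshold and the wrong value condition. The paper's proof is built precisely to hit the trigger: in the rank-increase case it counts the $\gg\tfrac12\log^{\ell 8\con+8\con}(n)$ fresh items that must arrive below $\rl$ and argues via \cref{le:wellshaped}/\cref{le:existence} that one of them lands with a height beyond $h(\al)$, and in the rank-decrease case it argues that $\al$ itself must receive an UPDATE.

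The substitute mechanism you propose --- forcing REFRESH's working level up to $\ell$ via the maximal updated height in Step~3 --- does not repair this, and you correctly flagged it as the delicate point. Nothing in REFRESH or INITIALIZE removes entries from $DS$; removal happens only inside UPDATE. Moreover the re-run of INITIALIZE in Step~4 of REFRESH is restricted to nodes with $d_i<r_\ell$, so in the rank-decrease case (where $rank(\rl^{t+1})<\log^{\ell 8\con}(n)$) the rebuild does not even reach $C_{\ell,2}^{t+1}$ and cannot produce a replacement representative, while the stale pair keeps $S_{\ell,2}$ nonempty and so satisfies the ``repeat until all classes are filled'' guard vacuously. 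To close the argument along the paper's lines you would need to show that, among the crossing nodes, with high probability one ends up with value below $\al$ \emph{and} resampled height inside $(h(\al),h(\rl)]$ (or that $\al$'s own node is necessarily updated), which is exactly the content of the paper's two-case analysis.
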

\begin{proof}
	Fix $\al \in C_{\ell, 1}^t$ and $\rl \in C_{\ell, 2}^t$. 
	Consider two cases: (1) there are (up to) $m$ UPDATES such that the rank of $\rl$ increases and (2) (up to) $m$ UPDATES such that the rank of $\rl$ decreases. \newline
	(1) Recall that $\rl \in C_{\ell,2}^t$ holds, where $d_i\in C_{\ell, 2}^t$ if for the rank of data item $d_i$ it holds $rank(d_i) \in [\log^{\ell 8\con + 4 \con} (n), \log^{\ell 8\con + 6 \con}(n)]$.
	The ranks of the data items in class $C_\ell$ are upper bounded by $\log^{\ell 8\con + 8 \con}(n)$.
	Thus, there are at least a number of $\log^{\ell 8\con + 8 \con}(n) - \log^{\ell 8\con + 6 \con}(n) \gg \frac{1}{2} \log^{\ell 8\con + 8 \con}(n)$ data items which observe an UPDATE and flip a coin due to the UPDATE algorithm. 
	By the same argument as in \cref{le:wellshaped} and \cref{le:existence}, there is one data item $d_{new}$ with a height strictly larger than the height of the data item $\al$ (with high probability).
	Thus, the data items $\al$ and $\rl$ get deleted from $DS$ with high probability.\newline
	(2) For this case we argue that it is unlikely that the data item $\rl$ which was a representative at time $t$ has a rank smaller than $\log^{\ell 8\con}(n)$ at time $t+1$ without observing an update of data item $\al$:
	Recall, that $\rl \in C_{\ell,2}^t$, i.e.\ $rank(\rl) \geq \log^{\ell 8\con + 4 \con}(n)$. 
	There have to be at least $\log^{\ell 8\con+ 4 \con}(n) - \log^{\ell 8\con + 2 \con} (n)$ UPDATE calls such that $\rl$ is incorrectly in $C_\ell^{t+1}$ (if it is not deleted).
	In case $\al \in C_{\ell,1}$ holds, the respective sensor node observes an UPDATE to the data item $\al$ with high probability followed by a deletion of $\rl$, concluding the proof.	
\end{proof}

\begin{theorem}
	\label{th:refresh}
	After each REFRESH operation the Rough Rank Sketch is restored, i.e.\ for each rank $k$ there exists a data item $d$ in the data structure $DS$ as defined in \cref{alg:SeleMon} with rank between $k$ and $k \cdot \log ^c (n)$ with probability at least $1-\log^{-c'}(n)$.
\end{theorem}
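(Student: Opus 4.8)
The plan is to verify the Rough Rank Sketch property rank by rank, i.e.\ class by class, splitting the $\bigO(\log n)$ classes into \emph{light} ones, for which the number $m$ of UPDATEs since the last (re)build satisfies $m < \log^{\ell 8\con}(n)$, and \emph{heavy} ones, for which $m \geq \log^{\ell 8\con}(n)$; the final statement is then obtained by a union bound over all classes. Throughout I use that REFRESH is invoked often enough that $t - t_0 \leq \log(n)$, as required by \cref{le:updateClassSmall} and \cref{le:updateClassLarge}.

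For a light class $C_\ell$ I would argue that REFRESH leaves it untouched and that this is fine: by \cref{le:update} its representative received no UPDATE since the last INITIALIZE (w.h.p.), so it is still present in $DS$, and by \cref{le:updateClassSmall} its rank at the query time is still (essentially) in $C_\ell$, hence it still answers every query $k \in C_{\ell-1}$ correctly. It remains to check that the recursive INITIALIZE call of REFRESH does not destroy light classes, which follows once the rebuild level $\ell^{*}$ determined in Steps~2--3 of REFRESH is shown to lie below every light class; then Step~4, run only on nodes observing a value below the representative of the lowest still-filled class, never reaches a light class.

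For a heavy class $C_\ell$ I would show it is correctly rebuilt. The rebuild level $\ell^{*}$ is taken in Steps~2--3 as the larger of the highest level below which some class is unfilled and the level $\ell''$ containing the maximal height observed among the $m$ updated nodes; the latter concentrates around $\log_{\nicefrac{1}{\phi}}(m)$ (the maximum of $m$ i.i.d.\ geometric random variables), so $\ell'' = \Theta\!\big(\log_{\log n}(m)/(8\con)\big)$, which matches the heavy/light threshold $m \geq \log^{\ell 8\con}(n) \Leftrightarrow \ell \leq \log_{\log n}(m)/(8\con)$. The key point is that $\ell^{*}$ dominates the index of every class that may have become invalid: if a heavy representative was pushed out of its class then by \cref{le:updateClassLarge} it was deleted by the UPDATE rule (w.h.p.), so that class is unfilled and hence $\leq \ell^{*}$ by Step~2, while a heavy representative that was not pushed out is still valid. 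Finally, Step~4 runs INITIALIZE on all nodes whose value is below the representative of the lowest intact class, whose rank exceeds the top of the rebuild region by at most a polylog factor, so no relevant node is excluded; \cref{th:selemonINIT}, applied to this restricted instance, then re-establishes the RRS property for every rebuilt class (w.h.p.).

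The step I expect to be the main obstacle is controlling $\ell^{*}$ from both sides at once: it must be \emph{large} enough that every class whose representative could have become invalid actually lies in the rebuild region (this is the concentration-of-the-maximum estimate tying $m$ to a class index, together with \cref{le:updateClassLarge}), and \emph{small} enough — with the bounding representative having small enough rank — that restricting INITIALIZE to nodes below it excludes no relevant node, so that \cref{th:selemonINIT} applies verbatim. Everything else is a union bound over the $\bigO(\log n)$ classes and the constantly many failure events per class, each of probability $\log^{-\Omega(\con)}(n)$, yielding the claimed success probability $1 - \log^{-c'}(n)$ for suitable constants $c, c'$ depending on $\con$.
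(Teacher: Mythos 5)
Your proposal is correct and follows essentially the same route as the paper's proof: a per-class case analysis on whether $m$ is small or large relative to $\log^{\ell 8\con}(n)$, invoking \cref{le:update}, \cref{le:position}, \cref{le:updateClassSmall} for the preserved classes and \cref{le:updateClassLarge} plus the INITIALIZE analysis (\cref{th:selemonINIT}) for the rebuilt ones, finished by a union bound over the $\bigO(\log n)$ classes. You are in fact more explicit than the paper about why the rebuild level chosen in Steps~2--3 of REFRESH covers exactly the invalidated classes and why restricting the recursive INITIALIZE to nodes below $r_\ell$ is harmless --- details the paper's proof leaves implicit --- but this is a refinement of, not a departure from, the same argument.
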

\begin{proof}
	First, observe that this property holds after the INITIALIZE operation due to \cref{th:selemonINIT}.
	It remains to show that this property is preserved or refreshed after an UPDATE operation.
	We argue for an arbitrary but fixed class $C_\ell$.
	Consider the following case distinction: (1) The representative $\rl$ of $C_\ell$ is updated and thus, the class gets completely rebuilt and (2) the representative $\rl$ of $C_\ell$ is not updated. \newline
	\textit{[Representant $\rl \in C_\ell$ is updated]} 
	In this case, each data item $d \in DS$ which is assigned to $C_\ell$ gets deleted.
	The structure gets rebuilt in the REFRESH operation.
	Thus, the correctness of RSS follows by similar arguments as the INITIALIZE operation. \newline
	\textit{[Representant $\rl \in C_\ell$ is not updated]}
	Now consider the (sub-)cases whether $m < \log^{\ell 8\con + 4 \con}(n)$ holds.
	In case it holds, the representative of $C_\ell^t$ is still valid at time $t+1$, i.e.\ $\rl \in C_\ell^t \cap C_\ell^{t+1}$ with high probability due to \cref{le:position} and \cref{le:updateClassSmall}.
	On the other hand, if $m$ is large, then $\rl$ gets deleted from $DS$ (and hence the whole class $C_\ell$) due to \cref{le:updateClassLarge}. 
\end{proof}

Finally, we show the number of messages the data structure uses in order to build or rebuild the Rough Rank Sketch.

\begin{theorem}
\label{th:messages}
	The operations INITIALIZE and REFRESH use $\bigO(\frac{1}{\phi} \log_{\nicefrac{1}{\phi}} (n))$ and $\bigO(\frac{1}{\phi} \log_{\nicefrac{1}{\phi}} (m))$ messages in expectation, respectively.
\end{theorem}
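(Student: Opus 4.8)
The plan is to analyze the two operations separately, in each case reducing the cost to a collection of \CoFaSel invocations whose total expected message count we already control via \cref{le:cofaselApproxRounds}. For INITIALIZE, recall that step~1 runs \CoFaSel$(\phi, h_{max} = \log_{1/\phi} n, k = 1)$, which by \cref{le:cofaselApproxRounds} (with $N = n$, $k = 1$) costs $\bigO(\frac{1}{\phi} \log_{\nicefrac{1}{\phi}}(n))$ messages in expectation. The subtlety is that INITIALIZE is a \emph{repeat loop}: it keeps invoking \CoFaSel (restricted to nodes with $d_i < r_\ell$ for the highest not-yet-filled level) until every class has both representatives $\al \in S_{\ell,1}$ and $\rl \in S_{\ell,2}$. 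I would argue that each single pass fills all classes simultaneously with probability $1 - \log^{-c^{*}}(n)$ — this is exactly the content of \cref{th:selemonINIT} together with \cref{le:wellshaped} and \cref{le:existence}, which show that in one run of \CoFaSel every class is well-shaped and contains a witness data item of the required height with that probability. Hence the number of passes is dominated by a geometric random variable with success probability $1 - o(1)$, so its expectation is $1 + o(1)$, contributing only a constant factor. (Even the crude bound of $\log^{*}(n)$ passes mentioned in the text after \cref{th:selemonINIT} would suffice up to that factor, but the tighter union bound gives the clean statement.) Multiplying, INITIALIZE costs $\bigO(\frac{1}{\phi} \log_{\nicefrac{1}{\phi}}(n))$ in expectation.

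For REFRESH, the key observation is that it calls INITIALIZE but \emph{restricted to sensor nodes $i$ with $d_i < r_\ell$}, where $\ell$ is chosen (steps 2--3) to be the largest level such that all classes $C_{\ell'}$ with $\ell' > \ell$ are still filled, adjusted upward to cover the maximal height $h$ among nodes that observed an UPDATE. I would show that the number of nodes participating is $\bigO(m)$ with high enough probability: by the class definition, $r_\ell \in C_{\ell,2}$ has rank at most $\log^{(\ell+1)8\con}(n)$, and a class only needs rebuilding if it was disturbed by one of the $m$ UPDATEs, which by \cref{le:update} requires $m \geq \log^{\ell 8 \con}(n)$ (or a representative was directly hit, also a low-probability event governed by the same bound). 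Consequently the restricted \CoFaSel runs effectively with $N = \bigO(\poly(m))$ participating nodes and $k = 1$, so by \cref{le:cofaselApproxRounds} one pass costs $\bigO(\frac{1}{\phi}(\log_{\nicefrac{1}{\phi}}(\frac{N}{1}) + 1)) = \bigO(\frac{1}{\phi}\log_{\nicefrac{1}{\phi}}(m))$ messages (the polynomial in $m$ is absorbed by the logarithm). The same constant-expected-number-of-passes argument as above, via \cref{th:refresh}, closes the loop, yielding $\bigO(\frac{1}{\phi}\log_{\nicefrac{1}{\phi}}(m))$ in expectation for REFRESH.

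The main obstacle I anticipate is making precise the claim that REFRESH only ever touches $\bigO(\poly(m))$ nodes \emph{and} that this bound survives the law of total expectation when the rare bad events (a class disturbed despite $m < \log^{\ell 8\con}(n)$, or the maximal-height level $\ell''$ being larger than expected) do occur. In those low-probability cases the restricted INITIALIZE might touch up to all $n$ nodes and cost $\bigO(\frac{1}{\phi}\log_{\nicefrac{1}{\phi}}(n))$; I would bound the probability of such an event by $\log^{-c}(n)$ using \cref{le:update} and a union bound over the at most $\log(n)$ classes, and check that $\log^{-c}(n) \cdot \bigO(\frac{1}{\phi}\log_{\nicefrac{1}{\phi}}(n)) = o(1)$ is absorbed into the stated asymptotic bound. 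A secondary, purely bookkeeping point is confirming that \cref{le:cofaselApproxRounds}'s hypothesis $N > k$ is met in the degenerate regime where $m$ is tiny — there one simply observes that no class has $m \geq \log^{\ell 8 \con}(n)$, so REFRESH does (almost) nothing and the bound holds trivially.
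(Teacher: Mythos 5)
Your proposal follows essentially the same route as the paper's proof: INITIALIZE is charged to the \CoFaSel bound of \cref{le:cofaselApproxRounds} with $h_{min}=1$ (giving $\bigO(\frac{1}{\phi}\log_{\nicefrac{1}{\phi}}(n))$), and REFRESH is handled by the law of total expectation, conditioning on the typical event that only classes $C_{\ell''}$ with $m \geq \log^{\ell'' 8\con}(n)$ must be rebuilt — so only $\poly(m)$ nodes participate and the cost is $\bigO(\frac{1}{\phi}\log_{\nicefrac{1}{\phi}}(m))$ — versus the $\log^{-c}(n)$-probability event whose $\bigO(\frac{1}{\phi}\log_{\nicefrac{1}{\phi}}(n))$ cost is absorbed. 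Your extra care about the repeat-until-filled loop and the $N>k$ hypothesis only makes explicit what the paper's terser proof leaves implicit.
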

\begin{proof}
	First, note that the bound on the communication used by INITIALIZE follows by the same arguments as \cref{le:cofaselApproxRounds}, since INITIALIZE only calls the \CoFaSel protocol. 
	Since there is no early termination rule, i.e.\ $h_{min} = 1$ holds, the protocol uses an amount of $\bigO(\frac{1}{\phi} \log_{\nicefrac{1}{\phi}} (n))$ messages in expectation as claimed above. 
	
	Second, for the upper bound of REFRESH, we argue using the law of total expectation and consider the smallest $\ell$ such that all $\ell' > \ell$ do not need to be refreshed, i.e.\ all representatives of $S_{\ell'}$ remain in the data structure.
	Now consider the event that $m < \log^{\ell 8 \con}(n)$ holds. 
	It directly follows that the number of messages to compute the representatives of classes $C_{\ell''}$ with $\ell'' \leq \ell$ needs $\bigO(\frac{1}{\phi} \log_{\nicefrac{1}{\phi}}(m)$ number of messages as claimed. 
	However, if $m \geq \log^{\ell 8 \con}(n)$ holds, observe that this event happens with a probability $\log^{-c}(n)$, where $c$ directly depends on the choice of $\con$. 
	For $c > 1$ the conditional expected costs yield costs of only $\bigO(\frac{1}{\phi})$, such that the overall costs follow as claimed.		
\end{proof}

\subsection{Implications for Top-$k$ and Approximate $k$-Select}
\label{sec:final}

Here we shortly describe how the data structure can be used to efficiently answer a \topk or \kselect request. 
Note that the $k$ does not need to be known beforehand, each request can be posed with a different parameter.

Both computations start with a REFRESH operation if there were UPDATEs since the last REFRESH.
We then obtain an item with rank close to $k$ with a ROUGH-RANK$(k)$ operation.

For determining the $k$ smallest items, the response from the data structure, denoted by $d$, is broadcasted such that all sensor nodes with a larger data item do not take place in the call of \topkprotocol with parameter $k$.
If this call was not successful, a second call of \topkprotocol is executed on all sensor nodes.
Considering the expected costs
conditioned on whether the first or second call was successful and applying the law of total expectation, the simple bound on the communication follows:
\begin{corollary}
One computation of the Top-$k$ needs $k + \bigO(\log(m)+ \log (\log (n)))$ messages in expectation assuming $m$ UPDATEs are processed since the last Top-$k$ query.
\end{corollary}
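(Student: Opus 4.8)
The plan is to decompose the query procedure into its three communication-incurring parts — the preceding REFRESH, the broadcast of the rough representative, and the call(s) to the \topkprotocol — and to combine their expected costs via the law of total expectation, conditioning on whether the data structure returned a usable representative.

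First I would handle the REFRESH. If $m \ge 1$ the query is preceded by one REFRESH, which by \cref{th:messages} instantiated with $\phi = \tfrac12$ uses $\bigO(\log m)$ messages in expectation and which by \cref{th:refresh} restores the Rough Rank Sketch. Consequently the subsequent ROUGH-RANK$(k)$ operation — a purely local lookup of the stored representative $r_\ell$ at the server, costing no messages — returns a data item $d$ with $rank(d) \in [k, k\cdot\log^c(n)]$ with probability at least $1-\log^{-c'}(n)$; broadcasting this $d$ costs one further message.

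Second I would analyse the restricted \topkprotocol call. After the broadcast, exactly the $N := rank(d)$ nodes with data item at most $d$ participate, and the protocol is run with $\phi = \tfrac12$, $h_{max} = \log n$, and parameter $k$. Since $N \le n$ we always have $h_{max} = \log n \ge \log_2 N = \log_{\nicefrac{1}{\phi}}(N)$, so whenever $N > k$ the hypotheses of \cref{th:topk} hold and the call uses at most $k + \log_2(N) + 1$ messages in expectation; moreover it is correct, since the \topkprotocol raises an error only when fewer than $k$ nodes participate (at recursion depth $h=1$ every remaining node responds and is inserted into $S$ in rank order, so a shallow $h_{max}$ can only inflate the message count, never break correctness). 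I would then split on the good event $\mathcal{G} = \{rank(d)\in[k,k\log^c(n)]\}$, which holds with probability at least $1-\log^{-c'}(n)$. On $\mathcal{G}$ we have $k < N \le k\log^c(n)$, hence the first call succeeds, no fall-back is triggered, and its expected cost is at most $k + \log_2\!\big(k\log^c(n)\big) + 1 = k + \bigO(\log\log n)$, the additive $\log k = o(k)$ being absorbed into the leading term. On $\overline{\mathcal{G}}$ I would bound the first call trivially by $k + \bigO(\log n)$ (cf.\ \cref{co:topk-opt}) and, if it failed because $rank(d) < k$, add the cost $k + \bigO(\log n)$ of the second \topkprotocol call on all $n$ nodes, which has $n > k$ participants and is therefore correct; since $\Pr[\overline{\mathcal{G}}] \le \log^{-c'}(n)$, this event contributes only $\log^{-c'}(n)\cdot\bigO(k+\log n) = \bigO\!\big(k\log^{-c'}(n)\big) + o(1)$ to the expectation, negligible for $c'>1$.

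Adding the three parts — $\bigO(\log m)$ for REFRESH, one message for the broadcast, and $k + \bigO(\log\log n)$ (plus the lower-order terms just absorbed) for the \topkprotocol calls — yields $k + \bigO(\log m + \log\log n)$ messages in expectation. The step I expect to be most delicate is the bookkeeping of the failure event: one must check that conditioning on $\overline{\mathcal{G}}$, and on the further sub-event $rank(d)<k$ that forces the fall-back call, does not blow up the expectation. This works only because the Rough Rank Sketch errs with merely polylogarithmically small probability while each individual \topkprotocol invocation costs at most $k+\bigO(\log n)$, so their product stays within the leading term $k$; making the constants $c,c'$ (and hence $\con$ in \cref{alg:SeleMon}) compatible with this is the only real subtlety.
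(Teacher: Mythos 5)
Your proposal is correct and follows essentially the same route as the paper's (very terse) argument: a REFRESH costing $\bigO(\log m)$ via \cref{th:messages}, a free ROUGH-RANK lookup whose representative $d$ restricts the \topkprotocol to the $rank(d) \leq k\log^c(n)$ nodes below it, a fall-back call on all $n$ nodes, and the law of total expectation conditioned on the success of the first call, with the $\log^{-c'}(n)$ failure probability of the Rough Rank Sketch damping the fall-back cost. The one blemish you flag yourself --- absorbing the additive $\log_2 k$ from $\log_2(k\log^c(n))$ into the leading $k$ term --- is inherited from the paper's own statement of the corollary, so your write-up is only more explicit than the original, not less correct.
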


Exactly the same approach is used to solve the \kselect problem. 
We define the (internal) failure probability $\delta' \coloneqq \log^{-1}(n)$ and obtain the following simple bound by applying the same arguments as for the \topkprotocol.
\begin{corollary}
One computation of \aproxKSelectProblem uses $\bigO(\sample + \log(m) + \log^2(\log (n)))$ msg.\ in expectation assuming $m$ UPDATES are processed since the last query.
\end{corollary}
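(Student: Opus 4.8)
The plan is to mirror the argument sketched just above for the Top-$k$ corollary, combining the cost of a \texttt{REFRESH}, the (free) \texttt{ROUGH-RANK} query, and a single-shot computation of approximate $k$-Select on the reduced node set, and then to clean up by a second fallback call on all nodes in the unlikely event that the first call fails. First I would invoke \cref{th:messages} to charge $\bigO(\frac{1}{\phi}\log_{\nicefrac 1\phi}(m)) = \bigO(\log m)$ messages in expectation for the \texttt{REFRESH} (with $\phi = \tfrac12$), and note that \texttt{ROUGH-RANK}$(k)$ costs $0$ messages and returns a representative $r_\ell \in S_{\ell,2}$ which, by \cref{th:refresh}, has rank in $[k, k\cdot\log^{c}(n)]$ with probability at least $1-\log^{-c'}(n)$. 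Broadcasting $r_\ell$ (one message) lets all nodes with a larger data item opt out, so the subsequent single-shot call to the \aproxkSelectProtocol runs on only $N' = \bigO(k\cdot\polylog(n))$ participating nodes.

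Next I would apply \cref{th:approxkSelect} with $N = N'$, $\phi=\tfrac12$, $\delta' \coloneqq \log^{-1}(n)$. The message bound there is
\[
\bigO\!\left( \Bigl(1 + \tfrac{\log^{c}(n)}{k}\,\delta'\Bigr)\sample + \tfrac{1}{\phi}\log_{\nicefrac 1\phi}\!\bigl(\tfrac{N'}{k}\bigr)\log\tfrac{1}{\delta'} \right).
\]
With $N'/k = \polylog(n)$ the second term is $\bigO(\log\log(n)\cdot\log\log(n)) = \bigO(\log^{2}\log(n))$; with $\delta' = \log^{-1}(n)$ the correction factor $\tfrac{\log^{c}(n)}{k}\delta'$ is at most $\log^{c-1}(n)$, but this contribution only matters when \texttt{ROUGH-RANK} returned a bad (too-large) rank, and—exactly as in \cref{th:approxkSelect}'s proof—conditioning on that event of probability $\le \log^{-c'}(n)$ and choosing the polylog exponents appropriately makes its expected contribution $\bigO(\sample)$ or lower. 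Summing the conditional expectations via the law of total expectation gives $\bigO(\sample + \log m + \log^{2}\log(n))$ for this phase, and the additive $\bigO(\log m)$ from \texttt{REFRESH} is already dominated.

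Finally I would handle the failure of the first single-shot call: if the \aproxkSelectProtocol reports an error (its internal \CoFaSelAmp\ failed, or the sample was too small), we rerun the \aproxkSelectProtocol once on \emph{all} $n$ nodes, which by \cref{co:approxkSelect} costs $\bigO(\sample + \log(n)\log\log(n))$ messages; since the first call succeeds with probability $1-\bigO(\delta' + \log^{-c'}(n)) = 1 - \bigO(\log^{-1}(n))$, the expected cost of the fallback is $\bigO(\log^{-1}(n))\cdot\bigO(\sample + \log(n)\log\log(n)) = \bigO(\sample/\log(n) + \log\log(n))$, which is absorbed. Adding up yields the claimed $\bigO(\sample + \log(m) + \log^{2}\log(n))$ messages in expectation. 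The main obstacle is bookkeeping the conditional expectation over the (rare) events that \texttt{ROUGH-RANK} or \CoFaSelAmp\ returns a rank polylogarithmically larger than intended: one must verify that the product of that polylog blow-up with the polynomially small failure probability is genuinely $\bigO(1)$ (or at worst $\bigO(\sample)$), which pins down how large the constants $c,c'$ and $\con$ must be chosen — but this is the same calculation already carried out in \cref{th:approxkSelect} and \cref{th:messages}, so no new ideas are needed.
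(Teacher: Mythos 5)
Your proposal is correct and follows essentially the same route the paper takes (the paper itself only sketches this corollary by saying ``exactly the same approach'' as for Top-$k$): REFRESH at cost $\bigO(\log m)$, a free ROUGH-RANK to restrict to $\bigO(k\cdot\polylog(n))$ participants, one run of the \aproxkSelectProtocol with $\delta'$ polylogarithmically small yielding the $\bigO(\log^2\log n)$ term, and a law-of-total-expectation argument over the rare failure/fallback events. Your explicit bookkeeping of the $\frac{\log^c(n)}{k}\delta'\cdot\sample$ term and of the exponent choices is a welcome elaboration of details the paper leaves implicit.
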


\section{Lower Bounds}
\label{se:lowerBounds}

In this section we consider lower bounds for the problems considered in the past sections. 
We show that our main results in the previous sections are asymptotically tight up to additive costs of $\bigO(\log(\log(n)))$ per time step for a constant choice of $\phi$, and $\bigO(\log^2(\log(n)))$, respectively. 
For scenarios in which the adversary changes a polylogarithmic number of values in each time step, the proposed bounds are asymptotically tight.

\begin{lemma}[\cite{mmm15}]
	\label{le:lowerBound}
	Every algorithm needs $\Omega(\log (n))$ messages in expectation to output the maximum in our setting.
\end{lemma}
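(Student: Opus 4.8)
Since the lemma bounds the worst-case expected message count of a (possibly randomized) algorithm, the plan is to invoke Yao's minimax principle and exhibit one input distribution on which every \emph{deterministic} protocol is expensive. For the hard instance I would use the following ``threshold'' family: pick $r$ uniformly at random in $\{1,\dots,n\}$ and let node $i$ observe the data item $i$ if $i\le r$ and the data item $-i$ if $i>r$. Then the maximum is exactly the data item $r$ (held by node $r$), so on an exact protocol the transcript $\Pi$ must determine $r$. Two features make this instance work: no node ever ``knows'' it holds the maximum (node $r$ cannot distinguish $r$ from any larger parameter), and the value seen by node $i$ ranges over only the two possibilities $\{i,-i\}$ — it is a function of the single bit $[\,i\le r\,]$.

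The core of the argument is a progress/potential bound. At any moment let $S$ be the set of parameters consistent with the transcript so far; $S$ is always a subinterval of $\{1,\dots,n\}$, it starts as the whole interval, and correctness forces it down to a singleton. Put $\Phi \coloneqq \log_2|S|$, so the protocol must spend a total of $\log_2 n$ units of ``progress''. Now analyze one round: the server's broadcast is a deterministic function of the transcript, and then each node responds (or stays silent) as a function of its own value and the history. Because $S$ is an interval and each node's value is fixed by one comparison against $r$, the round splits $S$ into consecutive pieces $p_1,\dots,p_L$; conditioning on $r$ uniform on $S$, the expected drop in $\Phi$ over the round equals the entropy $H(J)$ of the random piece $J$ containing $r$. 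The plan is to prove that the expected number of messages sent in the round is at least a constant times $H(J)$; summing over rounds gives $\E[\#\text{messages}]\ge \Omega(\log_2 n)=\Omega(\log n)$, and Yao's principle transfers this to randomized protocols. For protocols correct only with probability $1-\delta$ (in particular ``with high probability'', $\delta=n^{-c}$), replace the exact statement ``$\Phi$ reaches $0$'' by Fano's inequality, which still forces $\Phi$ to decrease by $(1-o(1))\log_2 n$ in expectation.

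The main obstacle is precisely this per-round inequality $\E[\#\text{messages in round}]\ge c\cdot H(J)$, and it is where the instance's structure must be used carefully. The delicate point is that a single broadcast can elicit many simultaneous responses, and that a node's \emph{silence} is itself informative (whether node $i$ speaks depends on $v_i$), so a ``cheap'' round might still convey a lot of information. The resolution I would pursue is combinatorial: the number of nodes that respond when $r$ lies in piece $p_\ell$ has, read from left to right, the form $A+f_\ell+g_\ell$ with $f$ non-decreasing and $g$ non-increasing (these count the responders whose triggering comparison falls to the left, respectively to the right, of the piece), hence the set of pieces on which only few nodes respond is an interval of bounded length — at most $\bigO(t)$ pieces have at most $t$ responders. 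Consequently, if the protocol keeps the expected round cost small it must concentrate the refinement's probability mass on those few cheap pieces, which bounds $H(J)$ by $\bigO(\log(\text{cost}))$ and forces $\E[\#\text{messages in round}]\ge \Omega(H(J))$. Everything else is bookkeeping; note that no concentration inequality is needed, which is why the bound deserves to be called simple.
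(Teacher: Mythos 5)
First, a caveat: the paper itself gives no proof of this lemma --- it is imported verbatim from \cite{mmm15} --- so there is no in-paper argument to compare yours against, and I judge the proposal on its own. Its skeleton is sensible: Yao's principle, the threshold instance in which node $i$ holds $i$ or $-i$ according to the bit $[i\le r]$, the invariant that the consistent set $S$ stays an interval, and the decomposition of the responder count over the pieces as $A+f_\ell+g_\ell$ with $f$ non-decreasing and $g$ non-increasing (from which ``at most $\bigO(t)$ pieces have at most $t$ responders'' does follow, after also charging the always-responding nodes) are all correct ingredients.

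The genuine gap is the pivotal per-round inequality $\E[\#\text{messages in round}]\ge c\cdot H(J)$: it is \emph{false} if ``messages'' counts only node responses, and the combinatorial fix you propose does not repair it. Concretely, let the server probe the single node $v=\max S$ (``respond iff your item is positive''); that node responds iff $r=v$. The round has only two pieces, with probabilities $1-\epsilon$ and $\epsilon$ for $\epsilon=1/|S|$, so the ``only $\bigO(t)$ cheap pieces'' bound is vacuous here, yet $H(J)=\Theta(\epsilon\log(1/\epsilon))$ while the expected number of responses is only $\epsilon$ --- the ratio is unbounded. Iterating this silence-driven linear search from the top of $S$ identifies the maximum with $\bigO(1)$ node responses in expectation, so responses alone cannot carry an $\Omega(\log n)$ bound. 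What saves the lemma in this model is that the server's probe/broadcast itself costs one message, so every round costs at least $1$; the correct chain is $H(J)\le\bigO\bigl(1+\log(1+\E[\#\text{responses}])\bigr)\le\bigO(\E[\text{round cost}])$, where the additive $1$ absorbs exactly the silence-heavy rounds. Your step ``$H(J)\le\bigO(\log(\text{cost}))$'' is vacuous (indeed wrong) precisely when the cost is sub-constant, which is the regime a silence-exploiting protocol lives in. Relatedly, you must also account for nodes that respond spontaneously as a function of the round number alone, with no eliciting broadcast; there the polylogarithmic bound on the number of rounds (or explicit accounting of all server messages) has to enter. With these repairs the approach should go through, but as written the central step fails.
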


We extend this lemma to multiple time steps and to monte carlo algorithms which solve the problem at each time step with a fixed probability:

\begin{lemma}
\label{le:secondlower}
	Every algorithm that outputs for a given $c>1$, a data item $d$ where $rank(d) \in \bigO(\log^c (n))$ holds and success probability at least $1 - \frac{1}{\log (n)}$ uses $\Omega(\log (n))$ messages in expectation in our setting.
\end{lemma}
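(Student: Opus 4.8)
The plan is to reduce the task of computing the minimum (equivalently, by negating all values, the maximum) on a network of $n' \coloneqq \lfloor n/R \rfloor$ nodes to the task in the statement, where $R \coloneqq \beta\log^c(n)$ and $\beta$ is the constant hidden in the $\bigO(\cdot)$. Since $R$ is polylogarithmic, $\log(n') = \log(n) - \Theta(\log\log(n)) = \Omega(\log(n))$, so the $\Omega(\log n')$ bound of \cref{le:lowerBound} is exactly what we need to transfer.

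Given a minimum instance on nodes $u_1,\dots,u_{n'}$ holding pairwise distinct values $w_1,\dots,w_{n'}$, I would simulate a purported algorithm $A$ for our problem on $n$ nodes partitioned into clusters $B_1,\dots,B_{n'}$ of size $R$ (surplus nodes get values larger than everything and are harmless). Inside $B_j$ I place $R$ pairwise distinct values lying strictly between $w_j$ and $w_j+1$; this keeps the global order unique and prevents clusters from interleaving, so that the ranks $1,\dots,R$ are occupied exactly by the nodes of the cluster $B_{j^\star}$ of the minimum $u_{j^\star}$. Node $u_j$ emulates all $R$ nodes of $B_j$ locally — it knows $w_j$, the construction is deterministic, and it can draw their private coins itself — and each message or broadcast of $A$ becomes a message or broadcast of the same type for the simulating algorithm, so the message count equals that of $A$ plus the $\bigO(1)$ messages needed at the end to translate the value $d$ returned by $A$ back into the answer (broadcast $d$; the unique $u_j$ with $d\in(w_j,w_j+1)$ replies).

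Whenever $A$ returns an item of rank at most $R$ it lies in $B_{j^\star}$, so the simulation is correct whenever $A$ is, i.e.\ with probability at least $1-1/\log(n) \ge \tfrac34$ for large $n$. Hence the simulation is a correct-with-constant-probability algorithm for the minimum on $n'$ nodes using $\E[T] + \bigO(1)$ messages in expectation, where $T$ is the message count of $A$; by \cref{le:lowerBound} this forces $\E[T] = \Omega(\log(n')) = \Omega(\log(n))$, which is the claim.

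The delicate points are model-level. One must check that a single node can faithfully and cost-freely emulate an entire cluster in the broadcast-augmented monitoring model (the emulation is pure local computation, and a broadcast reaches all clusters simultaneously, so this goes through), and that $A$'s output — a value rather than a node identity — can be converted into an answer for the minimum instance with only $\bigO(1)$ extra communication. One also needs \cref{le:lowerBound} in a form robust to a constant failure probability; this is how its standard distributional proof goes, and since our hypothesis gives the far stronger guarantee $1-1/\log(n)$, only a constant slack in the success probability is ever at stake, so this causes no difficulty.
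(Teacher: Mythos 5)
Your proof is essentially correct, but it takes a genuinely different route from the paper. You embed a hard exact-minimum instance on $n' = n/\Theta(\log^c n)$ nodes into the approximate problem by padding each value $w_j$ with a cluster of $R=\Theta(\log^c(n))$ nearby items, so that any output of rank $\bigO(\log^c(n))$ already identifies the minimum's cluster; the lower bound of \cref{le:lowerBound} is then transferred backwards through the simulation. The paper argues in the forward direction instead: it composes the hypothetical cheap algorithm with the exact \topkprotocol for $k=1$, restricted to the nodes below the returned item $d$. Since $\min \le d$ always holds, this composed algorithm finds the minimum \emph{exactly} (with probability $1$), and its expected cost is $o(\log(n)) + \bigO(\log(\log(n)))$ plus $\bigO(\log(n))\cdot\frac{1}{\log(n)} = \bigO(1)$ for the failure event, contradicting \cref{le:lowerBound} as stated. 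The practical difference is precisely the ``delicate point'' you flag yourself: your reduction produces a Monte Carlo algorithm for the minimum with success probability $1-\frac{1}{\log(n)}$, so you need \cref{le:lowerBound} in a form robust to constant failure probability, a strengthening that is plausible from the standard distributional argument but is not stated in the paper. The paper's composition sidesteps this entirely, since failures of the hypothetical algorithm only inflate the cost of the exact clean-up phase, never the correctness. Your approach, in exchange, is more self-contained on the algorithmic side (no appeal to \cref{th:topk}) and generalizes more readily to other rank-approximation guarantees; the cluster-value spacing you use (items strictly between $w_j$ and $w_j+1$) should be replaced by lexicographic pairs $(w_j,i)$ to avoid assuming the $w_j$ are unit-separated, but that is cosmetic. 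If you either prove the constant-error version of \cref{le:lowerBound} or append the paper's exact Top-$1$ clean-up step to your simulation, your argument closes completely.
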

\begin{proof}
	Assume there is an algorithm $\ALG$ which achieves the above requirements with $o(\log (n))$ messages in expectation.
	We construct an algorithm $\pazocal{A}$ which first applies $\ALG$ and then the \topkprotocol for $k \coloneqq 1$, and $h_{max} \coloneqq \log (n)$ afterwards.
	
	Observe that $\pazocal{A}$ uses $\bigO(\log (\log (n)))$ messages in expectation if the event $m \leq \log^c (n)$ occurs and $\log (n)$ messages, else. 
	The probability that the latter event occurs is upper bounded by $\frac{1}{\log (n)}$ and thus, $\pazocal{A}$ uses $o(\log (n))$ messages in expectation which contradicts \cref{le:lowerBound}.
\end{proof}

We further extend the lower bounds to multiple time steps in which an adversary is allowed to change values of at most $m$ nodes between two consecutive time steps.
It is easy to see if the instance always changes the smallest $m$ nodes between each time steps and chooses random permutations, the following holds:
\begin{theorem}
\label{th:lower}
	Every algorithm $\pazocal{A}$ that tracks the minimum over $T$ time steps uses an amount of $\Omega(\log (n) + T \ \log (m))$ messages in expectation, if m UPDATES per step are processed in our setting.
\end{theorem}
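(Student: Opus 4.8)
The plan is to combine the single-time-step lower bound of Lemma~\ref{le:lowerBound} with a per-step amortized argument built on the adversary strategy sketched just before the statement. The two summands $\Omega(\log n)$ and $\Omega(T\log m)$ will be established essentially independently and then merged: any algorithm tracking the minimum over $T$ steps must, at the very first step, solve the single-shot minimum problem from scratch, which already costs $\Omega(\log n)$ messages in expectation by Lemma~\ref{le:lowerBound}; separately, each of the $T$ steps forces $\Omega(\log m)$ messages because the adversary re-randomizes the bottom $m$ nodes, so the new minimum is genuinely unknown and ``localizing'' it among those $m$ candidates is itself an instance of the $\Omega(\log m)$ bound.

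The key steps, in order: (i) Fix the adversary: at time step $1$ assign the $n$ values by a uniformly random permutation; at each subsequent step $t$, take the $m$ nodes currently holding the $m$ smallest values and overwrite them with fresh values drawn so that, restricted to these $m$ nodes, their relative order is again a uniformly random permutation (and all of them stay below every untouched value). (ii) Observe that, conditioned on the history, the identity of the node holding the new global minimum is uniform over the $m$ touched nodes, and more strongly the induced instance on those $m$ nodes is distributionally identical to a fresh single-shot minimum instance on $m$ nodes; hence by Lemma~\ref{le:lowerBound} (applied with $m$ in place of $n$) any correct protocol must send $\Omega(\log m)$ messages \emph{in that step} in expectation. (iii) Sum over $t = 1,\dots,T$ by linearity of expectation to get $\Omega(T\log m)$. (iv) For the $\Omega(\log n)$ term, note the protocol cannot output the minimum at time $1$ without $\Omega(\log n)$ expected messages, again by Lemma~\ref{le:lowerBound}; this step is not dominated by the $\Omega(\log m)$ term when $m \ll n$, so the bound is the sum, giving $\Omega(\log n + T\log m)$.

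The main obstacle is step (ii): making precise that the per-step subproblem is ``as hard as'' a size-$m$ single-shot instance, so that Lemma~\ref{le:lowerBound} can be invoked black-box on it. One must argue that messages exchanged in earlier steps carry no useful information about the freshly randomized values on the $m$ touched nodes — which holds because those values are drawn independently of the history — and that the untouched $n-m$ nodes are irrelevant since they lie strictly above the new minimum. A clean way to handle this is a reduction: given a hypothetical $T$-step tracker using $o(\log n + T\log m)$ messages, build a single-shot minimum algorithm on $m$ nodes by embedding them as the ``bottom block'' of a padded $n$-node instance, running one REFRESH step, and reading off the answer; its expected cost would be $o(\log m)$, contradicting Lemma~\ref{le:lowerBound}. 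The remaining bookkeeping (that an expected-$o(\log n+T\log m)$ protocol has an average per-step cost $o(\log m)$ on a $1-o(1)$ fraction of steps, by Markov) is routine and I would not belabor it.
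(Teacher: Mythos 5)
Your proposal matches the paper's proof in all essentials: the same adversary (uniformly random initial permutation, then at each step re-randomizing the $m$ currently smallest nodes with fresh values below all untouched ones), the same invocation of Lemma~\ref{le:lowerBound} for the initial $\Omega(\log n)$ term and, applied to the fresh size-$m$ subinstance, for the $\Omega(\log m)$ per-step term, summed by linearity of expectation. Your added care in step (ii) --- the embedding/padding reduction making precise that past messages carry no information about the freshly randomized block --- only formalizes what the paper asserts in one sentence, so this is the same argument, slightly more rigorously stated.
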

\begin{proof}
	For the proof we assume, that $T$ is at most $\log (n)$, split $T$ in $T_1, T_2, \ldots$, each of size $\log (n)$ (except for the last one).

	Now, construct an instance as follows:
	Initially define sets of data items $\pazocal{S}_0, \pazocal{S}_1, \ldots, \pazocal{S}_{\log (n)}$.
	In each time step the data items for an UPDATE are chosen from these sets.
	For each consecutive set $\pazocal{S}_t, \pazocal{S}_{t+1}$ holds that the largest data item in $\pazocal{S}_{t+1}$ is smaller than the smallest data item in $\pazocal{S}_t$. 
	Furthermore, the size of each $\pazocal{S}_t$ is $m$, for $t \geq 1$ and $n$ for $t = 0$. 
	The adversary chooses a random permutation $\pi_0$ which defines which node initially observes a data item from $\pazocal{S}_0$.
	Denote the set of nodes with the $m$ smallest data items given by the random permutation by $\pazocal{N}$.
	For each consecutive time step $t > 0$ the adversary chooses a random permutation $\pi_t$ of $\pazocal{S}_t$ and chooses the $m$ nodes in $\pazocal{N}$ to process these UPDATES. 
	
	Observe that in the first step $(t = 0)$ based on \cref{le:lowerBound} we argue that at least $\Omega(\log (n))$ messages are used to identify the minimum. 
	For each consecutive time step $(t > 0)$ the adversary chooses the same set of nodes $\pazocal{N}$ to process UPDATES. 
	Based on the construction of $\pazocal{S}_{t}$ and $\pazocal{S}_{t-1}$ no information of $\pazocal{S}_{t-1}$ can be exploited by any algorithm to proceed the permutation of $\pazocal{S}_t$ using less than $\Omega(\log (m))$ messages in expectation, concluding the proof.
\end{proof}

\end{document}